\pgfplotsset{compat=newest}
\newtheorem{definition}{Definition}
\newtheorem{theorem}{Theorem}
\newtheorem{example}{Example}
\newcommand{\irule}[2]%
   {\mkern-2mu\displaystyle\frac{#1}{\vphantom{,}#2}\mkern-2mu}
\newcommand{\irulelabel}[3]
{
\mkern-2mu
\begin{array}{ll}
\displaystyle\frac{#1}{\vphantom{,}#2} & \!\!\!\!\!\! #3
\end{array}
\mkern-2mu
}
\newcommand{\tool}{\textsc{Nosdaq}\xspace}
\newcommand{\eusolver}{\textsc{EUSolver}\xspace}
\newcommand{\morpheus}{\textsc{Morpheus}\xspace}
\newcommand{\concord}{\textsc{Concord}\xspace}
\newcommand{\neo}{\textsc{Neo}\xspace}
\newcommand{\ngds}{\textsc{NGDS}\xspace}
\newcommand{\blaze}{\textsc{Blaze}\xspace}
\newcommand{\flashExtract}{\textsc{FlashExtract}\xspace}
\newcommand{\hades}{\textsc{Hades}\xspace}
\newcommand{\mitra}{\textsc{Mitra}\xspace}
\newcommand{\scythe}{\textsc{Scythe}\xspace}
\newcommand{\patsql}{\textsc{PATSQL}\xspace}
\newcommand{\dynamite}{\textsc{Dynamite}\xspace}
\newcommand{\flashRelate}{\textsc{FlashRelate}\xspace}
\newcommand{\datamaran}{\textsc{Datamaran}\xspace}
\newcommand{\sqlizer}{\textsc{SQLizer}\xspace}
\newcommand{\sqlsynthesizer}{\textsc{SQLSynthesizer}\xspace}
\newcommand{\sickle}{\textsc{Sickle}\xspace}
\newcommand{\simpl}{\textsc{Simpl}\xspace}
\newcommand{\treex}{\textsc{Treex}\xspace}
\newcommand{\bfpara}[1]{\vspace{5pt} \noindent \textbf{\textit{#1}}}
\newcommand{\textbox}[1]{\vspace{5pt} \noindent \fbox{\parbox{.97\linewidth}{#1}}}
\colorlet{punct}{red!60!black}
\definecolor{background}{HTML}{EEEEEE}
\definecolor{delim}{RGB}{20,105,176}
\colorlet{numb}{magenta!60!black}
\definecolor{darkgray}{rgb}{.4,.4,.4}
\definecolor{purple}{rgb}{0.65, 0.12, 0.82}
\lstdefinelanguage{json}{
    basicstyle=\normalfont\ttfamily,
    numbers=left,
    numberstyle=\scriptsize,
    stepnumber=1,
    numbersep=8pt,
    showstringspaces=false,
    breaklines=true,
    literate=
     *{0}{{{\color{numb}0}}}{1}
      {1}{{{\color{numb}1}}}{1}
      {2}{{{\color{numb}2}}}{1}
      {3}{{{\color{numb}3}}}{1}
      {4}{{{\color{numb}4}}}{1}
      {5}{{{\color{numb}5}}}{1}
      {6}{{{\color{numb}6}}}{1}
      {7}{{{\color{numb}7}}}{1}
      {8}{{{\color{numb}8}}}{1}
      {9}{{{\color{numb}9}}}{1}
      {:}{{{\color{punct}{:}}}}{1}
      {,}{{{\color{punct}{,}}}}{1}
      {\{}{{{\color{delim}{\{}}}}{1}
      {\}}{{{\color{delim}{\}}}}}{1}
      {[}{{{\color{delim}{[}}}}{1}
      {]}{{{\color{delim}{]}}}}{1},
}
\lstdefinelanguage{MongoDB}{
  keywords={\$unwind, \$match, \$group, \$project, \$addFields, \$lookup, \$gt, \$lt, \$count, \$sum, \$min, \$max, \$avg},
  keywordstyle=\color{delim}\ttfamily,
  ndkeywords={},
  ndkeywordstyle=\color{darkgray}\bfseries,
  identifierstyle=\color{black},
  sensitive=false,
  comment=[l]{//},
  morecomment=[s]{/*}{*/},
  commentstyle=\color{purple}\ttfamily,
  stringstyle=\ttfamily,
  literate=
     *{0}{{{\color{numb}0}}}{1}
      {1}{{{\color{numb}1}}}{1}
      {2}{{{\color{numb}2}}}{1}
      {3}{{{\color{numb}3}}}{1}
      {4}{{{\color{numb}4}}}{1}
      {5}{{{\color{numb}5}}}{1}
      {6}{{{\color{numb}6}}}{1}
      {7}{{{\color{numb}7}}}{1}
      {8}{{{\color{numb}8}}}{1}
      {9}{{{\color{numb}9}}}{1}
      {:}{{{\color{punct}{:}}}}{1}
      {,}{{{\color{punct}{,}}}}{1}
      {.}{{{\color{punct}{.}}}}{1}
      {\{}{{{\color{delim}{\{}}}}{1}
      {\}}{{{\color{delim}{\}}}}}{1}
      {[}{{{\color{delim}{[}}}}{1}
      {]}{{{\color{delim}{]}}}}{1},
}
\newcommand{\set}[1]{\{#1\}}
\newcommand{\denot}[1]{\llbracket{#1}\rrbracket}
\newcommand{\pair}[2]{(#1, #2)}
\newcommand{\schema}{\mathcal{S}}
\newcommand{\db}{\mathcal{D}}
\newcommand{\exa}{\mathcal{E}}
\newcommand{\prog}{\mathcal{Q}}
\newcommand{\query}{\prog}
\newcommand{\abstype}{\mathcal{T}}
\newcommand{\wl}{\mathcal{W}}
\newcommand{\sketch}{\Omega}
\newcommand{\collection}{\mathcal{C}}
\newcommand{\attr}{a}
\newcommand{\name}{N}
\newcommand{\doc}{D}
\newcommand{\pred}{\phi}
\newcommand{\absSet}{\Lambda}
\newcommand{\matches}{\triangleleft}
\newcommand{\refines}{\sqsubseteq}
\newcommand{\dom}{\emph{dom}}
\newcommand{\hole}{\texttt{?}}
\newcommand{\type}{\tau}
\newcommand{\arrType}[1]{\textsl{Arr}\langle #1 \rangle}
\newcommand{\collectionType}{T_C}
\newcommand{\docType}{T_D}
\newcommand{\valueType}{T_V}
\newcommand{\primitiveType}{T_P}
\newcommand{\cmdType}{\textsl{Type}\xspace}
\newcommand{\cmdAny}{\textsl{Any}\xspace}
\newcommand{\cmdArr}{\textsl{Arr}\xspace}
\newcommand{\absColl}{\tilde{\collection}}
\newcommand{\augType}{\abstype}
\newcommand{\absDb}{\tilde{\db}}
\newcommand{\cmdProject}{\textsl{Project}\xspace}
\newcommand{\cmdMatch}{\textsl{Match}\xspace}
\newcommand{\cmdAddFields}{\textsl{AddFields}\xspace}
\newcommand{\cmdUnwind}{\textsl{Unwind}\xspace}
\newcommand{\cmdGroup}{\textsl{Group}\xspace}
\newcommand{\cmdLookup}{\textsl{Lookup}\xspace}
\newcommand{\cmdNull}{\textsl{Null}\xspace}
\newcommand{\cmdSum}{\textsl{Sum}\xspace}
\newcommand{\cmdAvg}{\textsl{Avg}\xspace}
\newcommand{\cmdMin}{\textsl{Min}\xspace}
\newcommand{\cmdMax}{\textsl{Max}\xspace}
\newcommand{\cmdSize}{\textsl{SizeEq}\xspace}
\newcommand{\cmdCount}{\textsl{Count}\xspace}
\newcommand{\cmdExists}{\textsl{Exists}\xspace}
\newcommand{\cmdNum}{\textsl{Num}\xspace}
\newcommand{\cmdString}{\textsl{String}\xspace}
\newcommand{\cmdBool}{\textsl{Bool}\xspace}
\newcommand{\cmdDatetime}{\textsl{Datetime}\xspace}
\newcommand{\cmdObjectId}{\textsl{ObjectId}\xspace}
\newcommand{\cmdValue}{\textsl{Value}\xspace}
\newcommand{\cmdMap}{\textsl{map}\xspace}
\newcommand{\cmdFilter}{\textsl{filter}\xspace}
\newcommand{\cmdIte}{\textsl{ite}\xspace}
\newcommand{\cmdExtractAttrs}{\textsl{ExtractAttrs}\xspace}
\newcommand{\cmdAddAttrs}{\textsl{AddAttrs}\xspace}
\newcommand{\cmdFlatMap}{\textsl{flatmap}\xspace}
\newcommand{\cmdFlattenArr}{\textsl{Flatten}\xspace}
\newcommand{\cmdDedup}{\textsl{dedup}\xspace}
\newcommand{\cmdGet}{\textsl{Get}\xspace}
\newcommand{\cmdHasAp}{\textsl{HasAp}\xspace}
\newcommand{\cmdAllNull}{\textsl{AllNull}\xspace}
\newcommand{\cmdId}{\textsl{Id}\xspace}
\newcommand{\cmdToDocType}{\textsl{ToDocType}\xspace}
\newcommand{\tmax}{\text{max}\xspace}
\newcommand{\tmin}{\text{min}\xspace}
\begin{document}

\title{Synthesizing Document Database Queries using Collection Abstractions}

\author{\IEEEauthorblockN{Qikang Liu, Yang He, Yanwen Cai, Byeongguk Kwak, Yuepeng Wang}
\IEEEauthorblockA{Simon Fraser University, Buranby, BC, Canada \\
\{qla116, yha244, yca452, bka47, yuepeng\}@sfu.ca}
}

\maketitle

\begin{abstract}
Document databases are increasingly popular in various applications, but their queries are challenging to write due to the flexible and complex data model underlying document databases. This paper presents a synthesis technique that aims to generate document database queries from input-output examples automatically. A new domain-specific language is designed to express a representative set of document database queries in an algebraic style. Furthermore, the synthesis technique leverages a novel abstraction of collections for deduction to efficiently prune the search space and quickly generate the target query. An evaluation of 110 benchmarks from various sources shows that the proposed technique can synthesize 108 benchmarks successfully. On average, the synthesizer can generate document database queries from a small number of input-output examples within tens of seconds.
\end{abstract}

\section{Introduction} \label{sec:intro}

Document databases like MongoDB~\cite{mongodb-web24} and CouchDB~\cite{couchdb-web24} have become increasingly popular in various real-world scenarios, such as online commercial platforms, financial services, gaming, and social media applications~\cite{when-mongo-web24}. Different from traditional relational databases that primarily use structured data like tables, document databases persist data in a semi-structured format such as JSON and BSON. While the semi-structured data format provides developers with great flexibility in storing and querying complex data structures directly, it also raises significant challenges for users to write queries for document databases.

To help users write document database queries in an easy and convenient fashion, we develop a synthesis technique to generate queries automatically. Inspired by prior work on automated synthesis of SQL queries for relational databases~\cite{SQLSynthesizer-ase2013,scythe-pldi2017,morpheus-pldi17}, our technique aims to generate document database queries from input-output examples. Specifically, the user only needs to provide a small number of examples to demonstrate the query, where the input example is a small document database consisting of a few documents, and the output example is the desired query result over the input. The goal of our synthesis technique is to generate a document database query such that executing the query over the input example produces the output example.

However, unlike synthesizing SQL queries, there are several key challenges to synthesizing queries for document databases.
\begin{itemize}[leftmargin=*]
\item \emph{Hierarchical and nested data structures}.
Document databases support hierarchical and nested data structures, such as arrays, documents, and their combinations. Since queries for document databases constantly operate over these complex data structures, it is crucial for synthesizers to reason about complex data structures efficiently for better performance.
\item\emph{Specialized query language}.
Query languages for document databases may use specialized operators over complex data structures that relational databases cannot handle. For instance, MongoDB uses a lookup operator in aggregation pipelines to query data over multiple collections. Synthesizers need to support an expressive query language for document databases while maintaining the efficiency of exploring a large search space of the target query.
\end{itemize}

To address these challenges, we have designed \emph{a new domain-specific language} based on the aggregation pipeline in MongoDB that can express a representative set of queries with core operators of document databases. The queries of this language are in an algebraic style similar to relational algebra but tailored towards document databases.

Furthermore, prior work on program synthesis proposed an approach to speed up the synthesis process by deduction~\cite{morpheus-pldi17,neo-pldi2018}. For fast synthesis of document database queries, we have adapted this approach to our setting and developed \emph{a novel abstraction for collections containing hierarchical and nested data structures} to prune the search space efficiently.
The key insight is that the ``shape'' and size of collections can help the synthesizer quickly prune incorrect queries, even if the query is partial. Thus, our abstraction consists of two pieces of information about the collection: First, it includes the \emph{type} of documents inside the collection. Second, it includes a logical formula describing constraints over the \emph{size} of the collection.

More specifically, our synthesis technique is presented schematically in Figure~\ref{fig:workflow}.
At a high level, the synthesis technique takes an iterative approach and has two phases in each iteration. In the first phase, the synthesizer aims to find a query sketch, which is a partial query with some unknown constructs. In the second phase, it tries to complete the sketch into a full query that can satisfy all provided input-output examples. In general, it is not efficient to check if a sketch is feasible to be completed into a correct full query by checking all possible completions against the examples, because a sketch may have a large number of completions. To avoid such inefficiency, the key part of our synthesis technique is a deduction engine, which can check if a sketch is feasible to get a correct query without checking its completions. In particular, the deduction engine can directly evaluate the sketch over \emph{abstractions of collections} and obtain an abstract collection. If the expected output example is a valid concretization of the resulting abstract collection, the synthesizer concludes the sketch is feasible to complete and proceeds to find a correct completion. Otherwise, the synthesizer can safely conclude the sketch is infeasible to complete, prune the search space accordingly, and propose a different sketch to the next iteration by refining the infeasible sketch.

\begin{figure}[!t]
\centering
\includegraphics[width=0.45\textwidth]{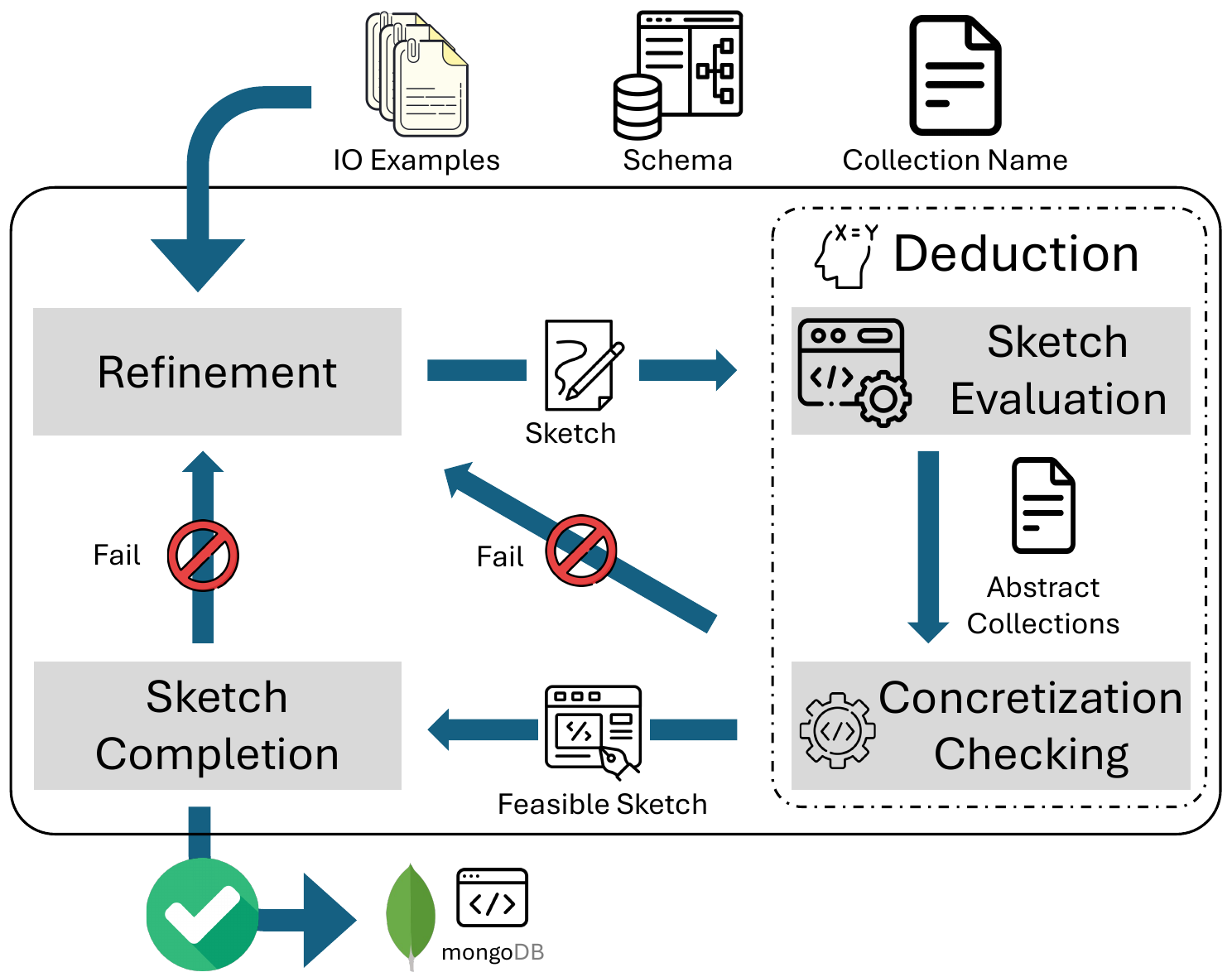}
\caption{Schematic workflow.}
\label{fig:workflow}
\vspace{-15pt}
\end{figure}

Based on this technique, we have developed a tool called \tool that can synthesize document database queries from input-output examples. To evaluate the synthesis technique, we have collected 110 benchmarks from various application scenarios, including StackOverflow, Kaggle, MongoDB official documents, and Twitter API documents. The evaluation result shows that \tool can successfully synthesize 108 document database queries within the 5-minute time limit. Furthermore, \tool only uses 1 -- 3 input-output examples and finishes query synthesis in an average of 25.4 seconds, which demonstrates the effectiveness and efficiency of our synthesis technique.

\bfpara{Contributions.}
To summarize, the main contributions of this paper are as follows.
\begin{enumerate}[leftmargin=*]
\item We develop a technique for synthesizing document database queries from input-output examples.
\item We design a new domain-specific language to express document database queries in algebraic style.
\item We design a novel abstraction for collections containing hierarchical and nested data structures and use this abstraction to speedup the synthesis of document database queries based on deduction.
\item We define the abstract semantics of document database queries based on our abstraction of collections.
\item We develop a tool called \tool and evaluate it over 110 benchmarks from various sources. The evaluation result shows that \tool is effective and efficient in synthesizing document database queries.
\end{enumerate}

\bfpara{Organization.}
The remainder of this paper is structured as follows. Section \ref{sec:overview} provides a motivating example to illustrate our technique. Section \ref{sec:formulation} formalizes the synthesis problem, and Section \ref{sec:abstraction} introduces collection abstractions. Sections \ref{sec:synthesis} and \ref{sec:impl} present the synthesis algorithm and its implementation details, respectively. Section \ref{sec:eval} presents the experimental setup and evaluation results. Section \ref{sec:related} discusses the related work, followed by a conclusion in Section \ref{sec:concl}.

\section{Motivating Example} \label{sec:overview}

To explain our synthesis technique, let us consider a concrete motivating example.
Given a document database collected from the Kaggle website that stores a list of Reddit posts. The database only has one collection called \texttt{posts} with the following schema
\footnote{
The database schema is simplified in this section for illustration.
}

\[
\footnotesize
\arrType{\set{\texttt{\_id}: \cmdString, \texttt{title}: \cmdString, \texttt{replies}: \arrType{\set{\texttt{depth}: \cmdNum}}}}
\]
where $\arrType{\type}$ denotes the array type of $\type$.
Specifically, the \texttt{posts} collection contains an array of documents, where each document has three attributes: \texttt{\_id}, \texttt{title}, and \texttt{replies}. The \texttt{replies} attribute is also an array of documents and the document has one attribute \texttt{depth} denoting the nesting level of the reply from the root post.

\begin{figure}[!t]
\footnotesize
\begin{lstlisting}[language=json, numbers=none]
{posts: [{
    _id: "1", title: "Title-1",
    replies: [{depth: 0}, {depth: 0}, {depth: 1}]
 }, {
    _id: "2", title: "Title-2",
    replies: [{depth: 0}, {depth: 1}, {depth: 2}]
 }, {
    _id: "3", title: "Title-3",
    replies: [{depth: 0}, {depth: 1}, {depth: 2},
        {depth: 3}]
 }]}
\end{lstlisting}
\vspace{-10pt}
\caption{Input example.}
\label{fig:example-input}
\vspace{-20pt}
\end{figure}

Now suppose the user wants to query the title of posts which have more than one non-zero-depth replies and the count of these replies.
\tool can help the user synthesize this query automatically. The user needs to provide small input-output examples to demonstrate their intention. For instance, Figure~\ref{fig:example-input} is an input example, and the corresponding output example is
\begin{center}
\begin{tabular}{c}
\begin{lstlisting}[language=json, numbers=none, basicstyle=\footnotesize\ttfamily]
[{reply_count: 3, title: "Title-3"},
 {reply_count: 2, title: "Title-2"}]
\end{lstlisting}
\end{tabular}
\end{center}

The goal of \tool is to synthesize a query such that executing the query on the input example produces the output example.
\tool takes an iterative approach to solve the synthesis problem. In each iteration, it first proposes a query sketch that may contain unknowns and then checks if the sketch is feasible to complete. If feasible, \tool completes the sketch into a full query by enumerative search and checks if any query satisfies the input-output example. If the sketch is infeasible to complete, then \tool refines the sketch and starts the next iteration.

\bfpara{First iteration.}
\tool starts with the simplest sketch in its domain-specific language -- the \texttt{posts} collection and checks its feasibility. To do so, the deduction engine of \tool uses the collection abstractions and evaluates the sketch based on its abstract semantics. 
Specifically, the abstraction for the \texttt{posts} collection is $\absColl = (\augType, \pred)$ where
\[
\footnotesize
\augType: \set{\texttt{\_id}: \cmdString, \texttt{title}: \cmdString, \texttt{replies}: \arrType{\set{\texttt{depth}: \cmdNum}}}
\]
is the type of inside documents and $\pred: l_0 = 3$ is the formula describing the size of the collection is 3.
\tool takes the abstract collection as input and evaluates the sketch \texttt{posts} based on the abstract semantics. The evaluation result is $\set{\absColl_1}$ where $\absColl_1 = (\augType, \pred)$, which is exactly the same as $\absColl$.
An important observation here is that the output example is not a concretization of $\absColl_1$ because its document has type  $\set{\texttt{title}: \cmdString, \texttt{reply\_count}: \cmdNum}$ and its size is 2.
Thus, \tool concludes the sketch \texttt{posts} is not feasible to complete to a correct query and starts to refine the sketch for the next iteration. In particular, \tool generates candidate sketches based on the grammar of its query language, such as $\cmdProject(\texttt{posts}, \vec{h})$ and $\cmdMatch(\texttt{posts}, \pred)$.

\bfpara{Deduction with collection abstractions.}
Several iterations later, \tool encounters the following sketch $\sketch_2$
\[
\small
\cmdProject(\cmdMatch(\cmdUnwind(\texttt{posts}, h_1) ,\pred), \vec{h_2})
\]
This time, the evaluation result is $\set{\absColl_2}$ where $\absColl_2 = (\augType_2, \pred_2)$ where $\augType_2$ is $\set{\texttt{title}: \cmdString}$ and $\pred_2$ is $l_0 = 3 \land l_1 \ge l_0 \land l_2 \le l_1 \land l_3 = l_2$, where $l_3$ corresponds to the size of $\absColl_2$.
The sketch $\sketch_2$ is still infeasible to complete, because the output document has an additional attribute \texttt{reply\_count} that does not match the type $\augType_2$.
\tool prunes this sketch $\sketch_2$ and continues to search for a feasible sketch.

\bfpara{Feasible sketch.}
After a few more iterations, \tool finds another sketch $\sketch_3$
\[
\small
\begin{array}{l}
\cmdProject(\cmdMatch(\cmdAddFields(\cmdGroup(\cmdMatch( \\
\hspace{2em} \cmdUnwind(\texttt{posts}, h_1), \pred), \vec{h_2}, \vec{a}, \vec{A}), \vec{h_3}, \vec{E}), \pred'), \vec{h_4}) \\
\end{array}
\]
The evaluation result of this sketch over the abstract semantics is a set of abstract collections $\absSet$, meaning the result can be some one inside $\absSet$. Among this set, there is an abstract collection $\absColl_3 = (\augType_3, \pred_3) \in \absSet$ where
\[
\footnotesize
\begin{array}{rcl}
\augType_3 &\hspace{-1em}:\hspace{-1em}& \set{\hole^+_0: \cmdAny, \hole^+_3: \cmdNum} \\
\pred_3 &\hspace{-1em}:\hspace{-1em}& l_0=3 \land l_1 \ge l_0 \land l_2 \le l_1 \land l_3 < l_2 \land l_4 = l_3 \land l_5 \le l_4 \land l_6 = l_5 \\
\end{array}
\]
Here, $\hole^+_0$ and $\hole^+_3$ denote placeholders that can match one or more attributes. $\cmdAny$ denotes any value type. $l_6$ is the variable that corresponds to the size of $\absColl_3$.
Observe that the output example is a concretization of abstract collection $\absColl_3$, because the \texttt{title} matches $\hole^+_0$ and \texttt{reply\_count} matches $\hole^+_3$. In addition, the size of the output collection is consistent with the size of $\absColl_3$, because $l_6 = 2 \land \pred_3$ is satisfiable. Therefore, \tool finds a feasible sketch $\sketch_3$.

\bfpara{Sketch completion.}
Given a feasible sketch $\sketch_3$, \tool aims to complete $\sketch_3$ by finding instantiations of all unknown operators in the sketch, such as $h_1, \vec{h_2}, \vec{a}$, etc. Towards this goal, \tool performs enumerative search and finds the following query finally
\[
\footnotesize
\begin{array}{l}
\cmdProject(\cmdMatch(\cmdAddFields(\cmdGroup(\cmdMatch( \\
\hspace{1em} \cmdUnwind(\texttt{posts}, \texttt{replies}), \texttt{replies.depth} > 0), \\
\hspace{1em} [\texttt{\_id}, \texttt{title}], [\texttt{reply\_count}], [\cmdCount()]), [\texttt{title}], \\
\hspace{1em} [\texttt{\_id.title}]), \texttt{reply\_count} > 1), [\texttt{reply\_count}, \texttt{title}]) \\
\end{array}
\]
Executing this query on the input example produces exactly the output example, so the synthesis process is finished. The query corresponds to the following MongoDB query
\begin{lstlisting}[language=MongoDB, numbers=none]
db.posts.aggregate([
  {$unwind: "$replies"},
  {$match: {"replies.depth": {$gt: 0}}},
  {$group:
    {_id: { _id: "$_id", title: "$title" },
     reply_count: { $count: {}}}},
  {$addFields: {title: "$_id.title"}},
  {$match: {reply_count: {$gt: 1}}},
  {$project: {_id: 0, reply_count: 1, title: 1}}])
\end{lstlisting}


\section{Problem Formulation} \label{sec:formulation}

In this section, we present formulations that are necessary for the rest of the paper and formally define our problem.

\subsection{Document Schema and Database}

We first precisely define the document schema and document database considered in this paper.

\begin{figure}[!t]
\centering
\small
\[
\begin{array}{rcl}
\text{Schema}~\schema & ::= & \set{\name_1 \mapsto T_{C_1}, \ldots, \name_m \mapsto T_{C_m}} \\
\text{Collection Type}~\collectionType & ::= & \arrType{\docType}\\
\text{Document Type}~\docType & ::= & \set{\attr_1: T_{V_1} \ldots \attr_n: T_{V_n}} \\
\text{Value Type}~\valueType & ::= & \docType ~|~ \arrType{\valueType}  ~|~  \primitiveType \\
\text{Primitive Type}~\primitiveType & ::= & \cmdNum ~|~ \cmdString ~|~ \cmdBool \\
    & ~|~ & \cmdDatetime ~|~ \cmdObjectId\\
\end{array}
\]
\[
\begin{array}{c}
N \in \textbf{Collection Names} \quad
\attr \in \textbf{Attributes} \\
\end{array}
\]
\vspace{-10pt}
\caption{Schema of document databases.}
\label{fig:schema-definition}
\vspace{-10pt}
\end{figure}

\bfpara{Document schema.}
As shown in Figure~\ref{fig:schema-definition}, a document schema $\schema$ is a map from collection names to collection types, where a collection type is an array of document types. A document type is a map from attributes to different value types, including document types, arrays, and primitive types such as \cmdNum, \cmdString, and \cmdBool.

\begin{figure}[!t]
\centering
\small
\[
\begin{array}{rcl}
\text{Database}~\db & ::= & \set{\name_1 \mapsto \collection_1, \ldots, \name_m \mapsto \collection_m} \\
\text{Collection}~\collection & ::= & [\doc] \\
\text{Document}~\doc & ::= & \set{\attr_1: v_1, \ldots, \attr_n: v_n} \\
\text{Value}~v & ::= & \doc ~|~ [v_1, \ldots, v_n] ~|~ c \\
\end{array}
\]
\[
N \in \textbf{Collection Names} \quad
\attr \in \textbf{Attributes} \quad
c \in \textbf{Constants}
\]
\vspace{-10pt}
\caption{Definition of document databases.}
\label{fig:db-definition}
\vspace{-10pt}
\end{figure}

\bfpara{Document database.}
As shown in Figure~\ref{fig:db-definition}, a document database is a map from collection names to their corresponding collections. A collection is an array of documents. A document is a map from attributes to values, where the value is a document, an array of values, or a constant of primitive types.

\begin{figure}[!t]
\footnotesize
\[
\begin{array}{c}

\irulelabel
{\begin{array}{c}
v \in \textsl{Constants} \quad
\cmdType(v) = \type \\
\end{array}}
{\vdash v : \type}
{\textrm{(T-Primitive)}}

\\ \ \\

\irulelabel
{\begin{array}{c}
\vdash v_i : \type \quad i = 1, \ldots, n \\
\end{array}}
{\vdash [v_1, \ldots, v_n] : \arrType{\type}}
{\textrm{(T-Array)}}

\\ \ \\

\irulelabel
{\begin{array}{c}
\doc = \set{\attr_1: v_1, \ldots, \attr_n: v_n} \\
\vdash v_i : \type_i \quad i = 1, \ldots, n \\
\end{array}}
{\vdash \doc : \set{\attr_1: \type_1, \ldots, \attr_n: \type_n}}
{\textrm{(T-Doc)}}

\\ \ \\



\irulelabel
{\begin{array}{c}
\db = \set{N_1 \mapsto \collection_1, \ldots, N_m \mapsto \collection_m} \\
\vdash \collection_i : \type_i \quad i = 1, \ldots, m \\
\end{array}}
{\vdash \db : \set{N_1 \mapsto \type_1, \ldots, N_m \mapsto \type_m}}
{\textrm{(T-DB)}}

\end{array}
\]
\vspace{-5pt}
\caption{Rules for conformance between databases and schemas.}
\label{fig:typing}
\vspace{-10pt}
\end{figure}

\bfpara{Typing and conformance.}
Figure~\ref{fig:typing} presents a set of typing rules for conformance checking between document databases and schemas, where judgments of the form $\vdash \db : \schema$ mean the database $\db$ conforms to schema $\schema$.
\footnote{
We view \cmdNull as a special value of any primitive type. If an attribute has both null values and non-null values in some collection, then its type will be the same as that of the non-null value.
}
Specifically, according to the \textrm{T-Primitive} rule, the type of a constant $v$ is simply $\cmdType(v)$.
The \textrm{T-Array} rule describes that all elements $v_i$ in an array must have the same type. If the element type is $\type$, then the array is of type $\arrType{\type}$.
The \textrm{T-Doc} rule states that the type of a document $\doc = \set{\attr_1: v_1, \ldots, \attr_n: v_n}$ is $\set{\attr_1: \type_1, \ldots, \attr_n: \type_n}$ where $\type_i$ is the type of $v_i$.
Finally, based on the \textrm{T-DB} rule, the schema (or the type) of a database is basically a map from collection names to types of their corresponding collections.


\subsection{Query Language}

Next, we describe the syntax and semantics
of our query language for document databases. The query language has a straightforward correspondence to a core query language of the MongoDB aggregation pipelines.

\begin{figure}[!t]
\centering
\small
\[
\hspace{-5pt}
\begin{array}{rcl}
\text{Query}~\prog & ::= & N ~|~ \cmdProject(\prog, \vec{h}) ~|~ \cmdMatch(\prog, \phi) \\ 
    & ~|~ & \cmdAddFields(\prog, \vec{h}, \vec{E}) ~|~  \cmdUnwind(\prog, h) \\
    & ~|~ & \cmdGroup(\prog, \vec{h}, \vec{a}, \vec{A}) ~|~ \cmdLookup(\prog, h, h, \name, a) \\
\text{Pred}~\pred & ::= & \top ~|~ \bot ~|~ h \odot c ~|~ \cmdSize(h, c) ~|~ \cmdExists(h) \\
    & ~|~ & \pred \land \pred ~|~ \pred \lor \pred ~|~ \neg \pred \\
\text{Expr}~E & ::= & h ~|~ h \oplus h ~|~ f(h) \\ 
\text{Agg}~A & ::= & \cmdSum(h) ~|~ \cmdAvg(h) ~|~ \cmdMin(h) ~|~ \cmdMax(h) ~|~ \cmdCount()\\
\text{LogicOp}~\odot & ::= & \leq ~|~ < ~|~ = ~|~ \neq ~|~ > ~|~ \geq \\
\text{ArithOp}~\oplus & ::= & + ~|~ - ~|~ \times ~|~  / ~|~ \%\\
\end{array}
\]
\[
\begin{array}{c}
\name \in \textbf{Collection Names} \quad
f \in \textbf{Math Functions} \\
c \in \textbf{Constants} \quad
\attr \in \textbf{Attributes} \quad
h \in \textbf{Access Paths} \\
\end{array}
\]
\vspace{-10pt}
\caption{Syntax of MongoDB Query. The two array parameters of AddFields must have the same length. The last two parameters of Group also must have the same length.}
\label{fig:syntax}
\vspace{-10pt}
\end{figure}

The syntax of the query language is shown in Figure~\ref{fig:syntax}. At a high level, a query is a sequence of operations including \cmdProject, \cmdMatch, \cmdAddFields, \cmdUnwind, \cmdGroup, and \cmdLookup, where different operators take different arguments such as a predicate $\pred$ or an expression $E$. Each operator corresponds to a stage of the MongoDB aggregation pipeline.
{
More specifically, the name $N$ simply retrieves collection $N$ from the database. $\cmdProject(\query, \vec{h})$ projects fields with access paths $\vec{h}$ from each document in the collection of $\query$.
$\cmdMatch(\query, \pred)$ filters the documents in $\query$'s collection, retaining only those satisfy the predicate $\pred$.
$\cmdAddFields(\query, \vec{h}, \vec{E})$ introduces new fields $\vec{h}$ with associated values of $\vec{E}$ to each document in $\query$. $\cmdUnwind(\query, h)$ deconstructs an array field $h$ in the documents of $\query$, mapping each document to a series of documents where the value of $h$ is replaced by individual elements of the original array. $\cmdGroup(\query, \vec{h}, \vec{a}, \vec{A})$ groups documents of $\query$ based on grouping keys $h$, transforming each group into a single document with new attributes $\vec{a}$ and aggregated values $\vec{A}$. Finally, $\cmdLookup(\query, h_1, h_2, N, a)$ adds a new attribute $a$ to each document of $\query$, where the attribute's value is a list of documents from a foreign collection $N$. This list only includes documents whose specified field $h_2$ in the foreign collection is the same as field $h_1$ in the original collection.
} 

The predicate $\pred$ can be true $\top$, false $\bot$, logical comparison $h \odot c$, size equality $\cmdSize(h, c)$, existence of an access path $\cmdExists(h)$, and boolean connectives. The expression $E$ can be an access path $h$, arithmetics $h \oplus h$, and mathematical functions $f(h)$. The access path is a sequence of attributes separated by dots such as $a_1.a_2.a_3$, denoting the path to access the data from the root document.



\begin{example}
Let us consider a document
\texttt{\{\_id: 1, name: "John", class: "SE", info: \{score: 90\}\}}.
The access path for the $\texttt{score}$ attribute in \texttt{info} is \texttt{info.score}.
\end{example}

\begin{example}
Consider a MongoDB query 
\begin{center}
\begin{tabular}{c}
\begin{lstlisting}[language=MongoDB, numbers=none, basicstyle=\small\ttfamily]
db.coll.aggregate([{$group:
    {_id: {name: "$name", class: "$class"}},
     total: {$sum: "$info.score"}}}])
\end{lstlisting}
\end{tabular}
\end{center}
It can be represented by the following query in our language
\[
\small
\begin{array}{l}
\cmdGroup(\texttt{coll}, [\texttt{name}, \texttt{class}], [\texttt{total}], [\cmdSum(\texttt{info.score})]) \\
\end{array}
\]
\end{example}

\begin{example}
Consider a collection 
\[
N = [\set{a:1, b:[2,3]}, \set{a:4, b:[5,6]}]
\]
The evaluation result of $\cmdUnwind(N, b)$ is
\[
[\set{a:1, b:2}, \set{a:1, b:3}, \set{a:4, b:5}, \set{a:4, b:6}]
\]
\end{example}



\subsection{Problem Statement}

Before we state the problem to solve in this paper, let us first define input-output examples.

\begin{definition}[Input-output example] \label{def:io-example}
An example $\exa$ over schema $\schema$ is a pair $(I, O)$ where $I$ is the document database over schema $\schema$ (i.e., $\vdash I : \schema$) and $O$ is the output collection.
\end{definition}

\bfpara{Synthesis problem.}
Given a database schema $\schema$, a collection name $\name \in \dom(\schema)$, and input-output examples $\vec{\exa}$ over $\schema$, the goal of our synthesis problem is to find a query $\query$ over collection $\name$ in the language shown in Figure~\ref{fig:syntax} such that for each example $(I, O) \in \vec{\exa}$, it holds that $\denot{\query}_I = O$. Here, $\denot{\query}_I$ represents the evaluation result of $\query$ given input database $I$. 

\section{Abstraction for Collections} \label{sec:abstraction}

In this section, we will introduce the abstraction for collections in document databases and how to compute abstractions for queries and sketches.

Intuitively, since collections in document databases contain an array of documents, the abstraction for collections should contain two pieces of information: (1) the \emph{type} of documents inside the collection and (2) the \emph{size} of the collection.
Based on this idea, we can define abstract collections and databases.

\begin{definition}[Abstract collection]
An abstract collection $\absColl = (\type, \pred)$ is a pair that consists of the type $\type$ of inside documents and the formula $\pred$ about the collection size.
\end{definition}

\begin{definition}[Abstract database] \label{def:abs-db}
An abstract database $\absDb = \set{\name_1 \mapsto \absColl_1, \ldots, \name_m \mapsto \absColl_m}$ is a map from collection names to abstract collections.
\end{definition}


Since the synthesis process also involves partial programs that may yield unknown attributes, values, or types in the documents, we now augment documents with a notion of placeholders.

\begin{definition}[Placeholder] \label{def:placeholder}
A placeholder $\hole^m$ denotes a top-level attribute that can match any concrete attribute and $m \in \set{1, +}$ denotes how many attributes it can match. $\hole^1$ means the placeholder matches exactly one attribute and $\hole^+$ means it can match one or more attributes.
\end{definition}

Accordingly, we update the type of documents with placeholders and augment attributes with a special type called \cmdAny that represents any possible value type.

\begin{definition}[Augmented type] \label{def:aug-type}
An augmented type $\augType$ is an extension of the document type $\docType$ in Figure~\ref{fig:schema-definition} where the attribute can be a named attribute or a placeholder and its type can be a value type $\valueType$ or $\cmdAny$ denoting any value type.
\end{definition}

\begin{example}
Let us consider an augmented type
\[
\small
\set{a: \cmdString, \hole_1^+: \cmdAny, \hole_2^+: \cmdNum, \hole_3^1: \arrType{\set{c: \cmdNum, d: \cmdString}}}
\]
Here, $\hole_1^+$ is a placeholder that matches one or more attributes of any type. $\hole_2^+$ is a placeholder that matches one or more attributes of \cmdNum type. $\hole_3^1$ is a placeholder that matches exactly one attribute corresponding to a collection where the document is of type $\set{c: \cmdNum, d: \cmdString}$.
\end{example}

Next, we can lift the notion of abstract collections to cases where placeholders are involved in the documents.

\begin{definition}[Abstract collection with placeholders] \label{def:abs-coll}
An abstract collection $\absColl = (\augType, \pred)$ is a pair consisting of (1) the augmented type $\augType$ of inside documents with potential placeholders and (2) the formula $\pred$ about the collection size.
\end{definition}

In the rest of the paper, we simply refer to abstract collections with placeholders as abstract collections, if the meaning is clear in the context.

\begin{definition}[Match] \label{def:match}
Let $\type$ be a document type and $\augType$ be an augmented type. We say $\type$ matches $\augType$, denoted $\type \matches \augType$, if
(1) replacing $\hole^1$ and $\hole^+$ with exactly one and at least one attributes respectively
and (2) replacing each occurrence of \cmdAny with a value type in $\augType$ yield a type equal to $\type$.
\end{definition}

Having defined the match relation between document types and augmented types, we can define the relation between concrete collections and abstract collections.

\begin{definition}[Collection concretization] \label{def:concretization}
A collection $\collection$ concretizes an abstract collection $\absColl = (\augType, \pred)$, denoted $\collection \refines \absColl$, if (1) $\type \matches \augType$ where $\vdash \collection : \arrType{\type}$ and (2) $\textsl{SAT}(\pred \land l_n = |\collection|)$ where $n = \textsl{MaxLabel}(\pred)$.
\end{definition}

Intuitively, if collection $\collection$ concretizes abstract collection $\absColl = (\augType, \pred)$, then (1) the type of documents in $\collection$ matches the augmented type $\augType$ of documents in $\absColl$; and (2) the size of $\collection$ is consistent with the size of $\absColl$ described by formula $\pred$.

\begin{example}
Consider the output collection $\collection$ in Section~\ref{sec:overview}
\begin{center}
\begin{tabular}{c}
\begin{lstlisting}[language=json, numbers=none, basicstyle=\small\ttfamily]
[{reply_count: 3, title: "Title-3"},
 {reply_count: 2, title: "Title-2"}]
\end{lstlisting}
\end{tabular}
\end{center}
Suppose $\absColl = (\augType, \pred)$ is an abstract collection where
\[
\footnotesize
\begin{array}{rcl}
\augType &\hspace{-1em}:\hspace{-1em}& \set{\hole_0^+: \cmdAny, \hole_2^+: \cmdNum} \\
\pred &\hspace{-1em}:\hspace{-1em}& l_0=3 \land l_1 \ge l_0 \land l_2 \le l_1 \land l_3 < l_2 \land l_4 = l_3 \land l_5 \le l_4 \land l_6 = l_5 \\
\end{array}
\]
Here, $l_6$ is the variable for the size of $\absColl$.
First, the type $\set{\texttt{reply\_count}: \cmdNum, \texttt{title}: \cmdString}$ matches the augmented type $\augType$.
Second, the size predicate $l_6 = 2$ is consistent with formula $\pred$.
Therefore, $\collection$ concretizes $\absColl$.
\end{example}


We can also lift the concretization relation to databases and abstract databases.

\begin{definition}[DB concretization] \label{def:db-concretization}
A database $\db$ over schema $\schema$ concretizes an abstract database $\absDb = \set{N_1 \mapsto \absColl_1, \ldots, N_m \mapsto \absColl_m}$, denoted $\db \refines \absDb$, if for all $1 \le i \le m$
\[
\schema[\name_i] = \arrType{\type_i} \Leftrightarrow \absColl_i = (\type_i, l_0 = |\db[\name_i]|)
\]
\end{definition}

\section{Synthesis using Collection Abstractions} \label{sec:synthesis}

In this section, we present our synthesis technique based on the abstraction of collections.

\subsection{High-Level Algorithm}

\begin{figure}[!t]
\small
\begin{algorithm}[H]
\caption{Synthesis Algorithm}
\label{algo:synthesis}
\begin{algorithmic}[1]
\Procedure{\textsc{Synthesize}}{$\schema, \name, \vec{\exa}$}
\vspace{2pt}
\Statex \textbf{Input:} Database schema $\schema$, collection name $\name$, input-output examples $\vec{\exa}$
\Statex \textbf{Output:} A query $\prog$ or $\bot$ indicating failure
\vspace{2pt}
\State $\wl \gets \{N\}$
\While{$\neg \textsl{IsEmpty}(\wl)$}
    \State $\sketch \gets \wl.\textsl{Dequeue}()$
    \If {$\textsc{Deduce}(\schema, \sketch, \vec{\exa})$}
        \State $\query \gets \textsc{CompleteSketch}(\schema, \sketch, \vec{\exa})$
        \If {$\query \ne \bot$}
            \Return $\query$
        \EndIf
    \EndIf

    \State $\wl.\textsl{EnqueueAll}(\textsc{Refine}(\sketch))$
    
\EndWhile
\State \Return $\bot$

\EndProcedure
\end{algorithmic}
\end{algorithm}
\vspace{-10pt}
\end{figure}

As shown in Algorithm~\ref{algo:synthesis}, our synthesis algorithm adapts the standard iterative approach based on worklists and sketches~\cite{morpheus-pldi17,neo-pldi2018} to the setting of document database queries.
Given a database schema $\schema$, a collection name $\name$, and input-output examples $\vec{\exa}$, the \textsc{Synthesize} procedure aims to find a query $\query$ over schema $\schema$ such that it satisfies the examples $\vec{\exa}$. 
Specifically, the worklist $\wl$ is initialized to be a singleton queue with the simplest sketch $\name$ (Line 2).
While the worklist is not empty, the synthesis procedure enters a loop (Lines 3 -- 8) that dequeues the current sketch $\sketch$ (Line 4) and checks if it is feasible to complete (Line 5). If yes, the procedure invokes the \textsc{CompleteSketch} procedure and tries to obtain a correct query (Lines 6--7). If the sketch is infeasible to complete or all of its completions are incorrect, the procedure also invokes the \textsc{Refine} procedure to transform the current sketch $\sketch$ to a set of sketches based on the grammar in Figure~\ref{fig:syntax} (Line 8). This synthesis procedure is repeated until a correct query $\query$ is found (Line 7) or returns $\bot$ if the worklist is empty.

\subsection{Sketch Enumeration and Refinement}

\begin{definition}[Sketch] \label{def:sketch}
A sketch $\sketch$ is a query $\query$ where only the collection name is known and other arguments are unknown.
\end{definition}

\begin{example}
Let us consider again the following sketch from the motivating example.
\[
\cmdProject(\cmdMatch(\cmdUnwind(\texttt{posts}, h_1), \pred), \vec{h_2})
\]
Here, the collection name \texttt{posts} is known, but access path $h_1$, predicate $\pred$, and access paths $\vec{h_2}$ are unknown.
\end{example}

Given a sketch $\sketch$ over collection $\name$, the \textsc{Refine} procedure substitutes the collection $\name$ with all possible query operators shown in Figure~\ref{fig:syntax} and obtains a set $S_{\sketch} = \set{\cmdProject(N, \vec{h}), \cmdMatch(N, \phi), \cmdAddFields(N, \vec{h}, \vec{E}), \\ \cmdUnwind(N, h), \cmdGroup(N, \vec{h}, \vec{a}, \vec{A}), \cmdLookup(N, h, h, \name, a)}$ and produces six new sketches. The refined sketches are
\[
\set{\sketch[\sketch_i/\name] ~|~ \sketch_i \in S_{\sketch}}
\]

\subsection{Abstract Semantics}

Since the key novelty of our synthesis technique is performing deduction on collection abstractions to prune infeasible sketches, we first introduce the abstract semantics of executing sketches over abstract collections.

\begin{figure}[!t]
\footnotesize
\[
\hspace{-10pt}
\begin{array}{c}

\irulelabel
{\begin{array}{c}
\absColl = \absDb[N] \\
\end{array}}
{\absDb, \type_O \vdash N \Downarrow \set{\absColl}}
{\textrm{(A-Collection)}}

\\ \ \\

\irulelabel
{\begin{array}{c}
\absDb, \type_O \vdash \sketch \Downarrow \absSet \quad
\pair{\abstype}{\pred} \in \absSet \\
\pair{\abstype}{\pred \land l_j \le l_i} \in \absSet' \\
\cmdId(\sketch) = i \quad
\cmdId(\cmdMatch(\sketch, P)) = j \\
\end{array}}
{\absDb, \type_O \vdash \cmdMatch(\sketch, P) \Downarrow \absSet'}
{\textrm{(A-Match)}}

\\ \ \\

\irulelabel
{\begin{array}{c}
\absDb, \type_O \vdash \sketch \Downarrow \absSet \quad
\pair{\augType}{\pred} \in \absSet \\
\type_k = \cmdToDocType(\augType)\\
\pair{(\augType - \type_k) \cup (\type_k \cap \type_O)}{\pred \land l_j = l_i} \in \absSet' \\
\cmdId(\sketch) = i \quad
\cmdId(\cmdProject(\sketch, \vec{h})) = j \\
\end{array}}
{\absDb, \type_O \vdash \cmdProject(\sketch, \vec{h}) \Downarrow \absSet'}
{\textrm{(A-Project)}}

\\\\\

\irulelabel
{\begin{array}{c}
\absDb, \type_O \vdash \sketch \Downarrow \absSet \quad
\pair{\augType}{\pred} \in \absSet \\
\pair{\augType \cup $\set{$\texttt{?}_0^+$: \cmdAny}$}{\pred \land l_j = l_i} \in \absSet' \\
\cmdId(\sketch) = i \quad
\cmdId(\cmdAddFields(\sketch, \vec{h}, \vec{E})) = j \\
\end{array}}
{\absDb, \type_O \vdash \cmdAddFields(\sketch, \vec{h}, \vec{E}) \Downarrow \absSet'}
{\textrm{(A-AddFields)}}

\\\\\

\irulelabel
{\begin{array}{c}
\absDb, \type_O \vdash \sketch \Downarrow \absSet \quad
\pair{\augType}{\pred} \in \absSet \\
\textsl{Type}(a_A) = \arrType{\tau} \land \textsl{NotInArr}(a_A)\\
\set{\pair{\augType[\type/a_A]}{\pred \land l_j \ge l_i}
 | a_A \in \augType \land \forall p.\forall q. a_A \ne \texttt{?}_p^q}  \subseteq \absSet' \\
\cmdId(\sketch) = i \quad
\cmdId( \cmdUnwind(\sketch, h)) = j \\
\end{array}}
{\absDb, \type_O \vdash \cmdUnwind(\sketch, h) \Downarrow \absSet'}
{\textrm{(A-Unwind)}}

\\\\\

\irulelabel
{\begin{array}{c}
\absDb, \type_O \vdash \sketch \Downarrow \absSet \quad
\pair{\augType}{\pred} \in \absSet \\
F = \set{\textsl{ToDocType}(\absDb[N]_\augType) | N \in \textsl{dom}(\absDb)}\\
\set{\pair{\augType \cup $\set{$\texttt{?}_j^1$:$\arrType{\tau_F}$}$}{\pred \land l_j = l_i} | \tau_F \in F}  \subseteq \absSet' \\
\cmdId(\sketch) = i \quad
\cmdId(\cmdLookup(\sketch, h, h, N, a)) = j \\
\end{array}}
{\absDb, \type_O \vdash \cmdLookup(\sketch, h, h, N, a) \Downarrow \absSet'}
{\textrm{(A-Lookup)}}

\\\\\

\irulelabel
{\begin{array}{c}
\absDb, \type_O \vdash \sketch \Downarrow \absSet \quad
\pair{\augType}{\pred} \in \absSet \\
G = $\set{\set{$\texttt{?}_j^+$: \cmdNum}, \set{}}$\\
\set{\pair{\texttt{\set{\_id:$\tau_K$}} \cup \type_g}{\pred \land l_j < l_i} \\
| \tau_K \subseteq \cmdToDocType(\augType) \land \type_g \in G} \subseteq \absSet' \\
\cmdId(\sketch) = i \quad
\cmdId(\cmdGroup(\sketch, \vec{h}, \vec{a}, \vec{A}) = j \\
\end{array}}
{\absDb, \type_O \vdash \cmdGroup(\sketch, \vec{h}, \vec{a}, \vec{A}) \Downarrow \absSet'}
{\textrm{(A-Group)}}

\end{array}
\]
\vspace{-5pt}
\caption{Abstract Semantics.  The \cmdToDocType function transforms an augmented type to a document type by deleting all placeholder attributes and the attributes with \cmdAny type. The \textsl{NotInArr} checks whether an attribute is not nested in an array type otherwise it is unable to be unwinded.}
\label{fig:abstract-semantics}
\vspace{-10pt}
\end{figure}

{
At a high level, the abstract semantics is consistent with the concrete semantics in describing how an operator modifies the collection size and the type of its documents, but it applies to the abstract database.
}
Formally, the abstract semantics is defined in Figure~\ref{fig:abstract-semantics}, where judgments of the form $\absDb, \type_O \vdash \sketch \Downarrow \absSet$ mean that a sketch $\sketch$ evaluates to a set of abstract collections $\absSet$ given an abstract database $\absDb$ and the output document type $\type_O$.
{
Specifically, by the \textrm{A-Collection} rule, the only abstract collection for query $\name$ can be obtained by looking up the abstract database $\absDb$.
By \textrm{A-Match}, $\cmdMatch$ reduces the collection size without changing the type of its inside documents.
By \textrm{A-Project}, $\cmdProject$ preserves the collection size but modifies the document type. In particular, the output document only retains a subset of the original attributes, and the remaining attributes can be inferred from the output.
According to \textrm{A-AddFields}, $\cmdAddFields$ adds one or more attributes of $\cmdAny$ type without changing the size of the collection.
By the \textrm{A-Unwind} rule, $\cmdUnwind(\sketch, h)$ potentially increases the collection size, deconstructs the array $h$ of sketch $\sketch$, and updates the type accordingly.
By the \textrm{A-Lookup} rule, $\cmdLookup$ preserves the collection size but introduces a new attribute to the $\augType$, where the type of the new attribute is the same as that of the foreign collection.
Finally, as shown in the \textrm{A-Group} rule, $\cmdGroup$ reduces the collection size and constructs a new type. In particular, it introduces a new attribute \texttt{\_id} as the key and uses a new $\type_g$ to represent a series of numeric attributes for aggregation results. $\type_g$ can also be empty, indicating the absence of aggregation attributes.
}

\begin{example}
Consider again the following sketch in Section~\ref{sec:overview}
\[
\cmdProject(\cmdMatch(\cmdUnwind(\texttt{posts}, h_1),\pred), \vec{h_2})
\]
Based on the rules in Figure~\ref{fig:ablation-study}, we recursively evaluate the sketch. The evaluation result of \texttt{posts} is 
\[
\begin{array}{l}
\set{(\set{\texttt{\_id}: \cmdString, \texttt{title}: \cmdString, \texttt{replies}:\\ \qquad \arrType{\set{\texttt{depth}: \cmdNum}}}, l_0=3)}
\end{array}
\]
The result of $\cmdUnwind(\texttt{posts}, h_1)$ is 
\[
\begin{array}{l}
\set{(\set{\texttt{\_id}: \cmdString, \texttt{title}: \cmdString, \texttt{replies}:\\ \qquad \set{\texttt{depth}: \cmdNum}}, l_0=3 \land l_1 \ge l_0)}
\end{array}
\]
The result of $\cmdMatch(\cmdUnwind(\texttt{posts}, h_1), \pred)$ is 
\[
\begin{array}{l}
\set{(\set{\texttt{\_id}: \cmdString, \texttt{title}: \cmdString, \texttt{replies}:\\ \qquad \set{\texttt{depth}: \cmdNum}}, l_0=3 \land l_1 \ge l_0 \land l_2 \le l_1)}
\end{array}
\]
The result of $\cmdProject(\cmdMatch(\cmdUnwind(\texttt{posts}, h_1), \pred), \vec{h_2}$) is 
\[
\begin{array}{l}
\set{(\set{\texttt{title}: \cmdString}, l_0=3 \land l_1 \ge l_0 \land l_2 \le l_1 \land l_3 = l_2)}
\end{array}
\]
\end{example}

Next, we establish the relationship among queries, sketches, concrete semantics, and abstract semantics with a theorem.

\begin{theorem} \label{thm:abs-soundness}
Let $\absDb$ be an abstract database over schema $\schema$, $\sketch$ be a sketch, $\query$ be a query that is a completion of $\sketch$, and $(I, O)$ be an input-output example, where $\vdash I: \schema$ and $\vdash O : \arrType{\type_O}$.
If $\denot{\query}_{I} = O$, $I \refines \absDb$, and $\absDb, \type_O \vdash \sketch \Downarrow \absSet$, then there exists an abstract collection $\absColl \in \absSet$ such that $O \refines \absColl$.
\end{theorem}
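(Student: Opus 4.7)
The statement is a standard soundness claim for abstract interpretation, so the plan is to proceed by structural induction on the sketch $\sketch$, following exactly the case split given by the abstract semantics rules in Figure~\ref{fig:abstract-semantics}. Throughout, the induction hypothesis will assert the conclusion for every proper subsketch of $\sketch$, against the intermediate concrete collection produced by the corresponding subquery of $\query$, and against an appropriately chosen abstract collection in the abstract result set of that subsketch.

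\textbf{Base case.} If $\sketch = N$, the only completion is $\query = N$ and \textrm{A-Collection} gives $\absSet = \{\absDb[N]\}$. From $\denot{N}_I = I[N] = O$ and $I \refines \absDb$ (Definition~\ref{def:db-concretization}), the type of $I[N]$ equals the type component of $\absDb[N]$ (so the match relation of Definition~\ref{def:match} holds trivially), and the size constraint $l_0 = |O|$ is consistent with $l_0 = |I[N]|$. Hence $O \refines \absDb[N]$, which is the required witness.

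\textbf{Inductive step.} For each outer operator I would write $\query = \textsl{Op}(\query', \ldots)$ and $\sketch = \textsl{Op}(\sketch', \ldots)$, let $O' = \denot{\query'}_I$, apply the IH to the premise $\absDb, \type_O \vdash \sketch' \Downarrow \absSet$ to obtain $\absColl' = (\augType', \pred') \in \absSet$ with $O' \refines \absColl'$, and then show the specific abstract collection $\absColl \in \absSet'$ constructed from $\absColl'$ by the rule satisfies $O \refines \absColl$. This decomposes into two independent obligations per rule: a type-match obligation (Definition~\ref{def:match}) and a size-constraint obligation (satisfiability of $\pred \land l_n = |O|$). For \textrm{A-Match}, both are immediate: concrete \cmdMatch leaves the document type unchanged and only deletes documents, so $|O| \le |O'|$ validates the new conjunct $l_j \le l_i$. \textrm{A-AddFields} is similar: the type gains attributes captured by the fresh placeholder $\hole_0^+$ of type \cmdAny, and the size is preserved. \textrm{A-Lookup} and \textrm{A-Unwind} follow the same pattern, with the witness chosen by picking the foreign-collection type (resp.\ the array attribute being unwound) that the concrete query in fact uses.

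\textbf{Main obstacles.} The technically heaviest cases are \textrm{A-Project} and \textrm{A-Group}, because both perform a nontrivial rewrite of the augmented type. For \textrm{A-Project}, I would argue that since the overall query must produce output of type $\type_O$ and \cmdProject only retains a subset of existing attributes, the concrete projected type can be written as the disjoint union of (i) unresolved placeholder attributes inherited from $\augType'$ and (ii) named attributes appearing in $\type_O$, matching the expression $(\augType' - \type_k) \cup (\type_k \cap \type_O)$; the witness is this particular member of $\absSet'$. For \textrm{A-Group}, the key observation is that the concrete grouping keys form some sub-type $\type_K \subseteq \cmdToDocType(\augType')$ nested under \texttt{\_id}, and the aggregation attributes (if any) are numeric, so choosing $\type_g = \{\hole_j^+: \cmdNum\}$ when $\vec{A}$ is nonempty and $\type_g = \emptyset$ otherwise yields a member of the set on the right-hand side that matches $O$'s document type; the size constraint $l_j < l_i$ holds because grouping strictly reduces cardinality whenever the group produces at least one output row, which is the case being considered (a boundary subcase where $|O'| = |O| = 0$ or $|O'| = |O| = 1$ may need to be treated by relaxing $<$ to $\le$, or handled by the convention that the rule is only invoked on non-degenerate inputs). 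Once these two rules are dispatched, the remaining operators are routine, and the induction closes.
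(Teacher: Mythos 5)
Your proposal follows essentially the same route as the paper's proof: structural induction on the sketch, with each abstract-semantics rule discharged by a type-match obligation and a satisfiability obligation on the size formula, and the same witnesses chosen in the \textsl{Project}, \textsl{Group}, \textsl{Lookup}, and \textsl{Unwind} cases (the paper makes explicit the assumption you use implicitly for \textsl{Project}, namely that placeholder-generated fields cannot be changed and must all be kept in the final output). Your concern about the strict inequality $l_j < l_i$ in the \textsl{Group} case is a legitimate subtlety, but the paper's proof simply asserts $|O'| < |O|$ from the concrete semantics without treating the degenerate case you identify.
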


{
Intuitively, the theorem states that the abstract semantics is correct with respect to the concrete semantics. In particular, if the input is a concretization of the abstract database and the query is a completion of the sketch, then the evaluation result of the sketch on the abstract database is an over-approximation of the output produced by executing the query on the input.
}

\subsection{Deduction by Collection Abstractions}

Next, let us present how to perform deduction based on the collection abstractions.

\bfpara{Deduction algorithm.}
Our deduction algorithm is shown in Algorithm~\ref{algo:deduce}. For each example $\exa_j = (I_j, O_j)$, we compute the document type in $O_j$. The \textsc{ComputeType} computes the type of $O_j$ by the typing rules in Figure~\ref{fig:typing} (Line 4). Then the \textsl{In} function extracts the document type from the type of $O_j$, namely $\textsl{In}(\arrType{\type}) = \type$. We also compute the abstract input database $\absDb_j$ by computing all the abstractions of collections in the database (Line 5). Each collection name $N_i$ is mapped to an abstract collection whose augmented type is the document's type inside the collection and the predicate is $l_0$ equals the collection size.
For all pairs of abstract input database $\absDb_j$ and output document type $\type_{Oj}$, we evaluate the sketch $\sketch$ based on the abstract semantics in Figure~\ref{fig:abstract-semantics} and get a set of abstract collections for each example (Line 6).
If for each example $(I_j, O_j)$, there is an abstract collection $\absColl \in \absSet_j$ such that $O_j$ is a concretization of $\absColl$, then the sketch is feasible to complete (Line 7). Otherwise, the sketch is infeasible.

\begin{figure}[!t]
\small
\begin{algorithm}[H]
\caption{Deduction by Abstract Collections}
\label{algo:deduce}
\begin{algorithmic}[1]
\Procedure{\textsc{Deduce}}{$\schema, \sketch, \vec{\exa}$}
\vspace{2pt}
\Statex \textbf{Input:} The database schema $\schema$, a sketch $\sketch$ and input-output examples $\vec{\exa}$
\Statex \textbf{Output:} $\top$ if $\sketch$ is feasible otherwise $\bot$

\vspace{2pt}
\For {$j \gets 1 ~\text{to}~ |\vec{\exa}|$}
    \State $(I_j, O_j) \gets \exa_j$
    \State $\type_{Oj} \gets \textsl{In}(\textsc{ComputeType}(O_j))$
    \State $\absDb_j \gets \set{{N_i \mapsto (\textsl{In}(\schema[N_i]), l_0 = |I_j[N_i])| N_i \in \textsl{dom}(\schema))}}$
    \State $\absSet_j \gets \textsc{Eval}(\absDb_j, \type_{Oj}, \sketch)$
\EndFor
\If {$\forall j. \exists \absColl. \absColl \in \absSet_j \land O_j \refines \absColl$}
    \Return $\top$
\Else
    ~\Return $\bot$
\EndIf
\EndProcedure
\end{algorithmic}
\end{algorithm}
\vspace{-10pt}
\end{figure}


\bfpara{Concretization check.}
Recall from Definition~\ref{def:concretization} that to check a collection $\collection$ is a concretization of abstract collection $\absColl = (\augType, \pred)$, we need to check (1) the type $\type$ of documents inside $\collection$ matches $\augType$, i.e. $\type \matches \augType$, and (2) the size of $\collection$ is consistent with the formula $\pred$. We use an off-the-shelf SMT solver to check condition (2) by checking the satisfiability of formula $\pred \land l_n = |\collection|$ where $n = \textsl{MaxLabel}(\pred)$. We also develop a procedure for type match based on Definition~\ref{def:match}, which can be best explained with the following example.

\begin{example}\label{example:abstract-type}
Suppose we have an augmented type
\[
\begin{array}{rcl}
\augType &\hspace{-2em}=\hspace{-2em}& \set{\texttt{name}: \cmdString, \texttt{id}: \cmdString, \texttt{info}: \set{\texttt{tel}: \cmdString}, \\
&& \texttt{?}_1^+: \cmdNum, \texttt{?}_2^+: \cmdAny, \\
&& \texttt{?}_3^1: \cmdArr \langle \set{\texttt{profId}: \cmdString, \texttt{profName}: \cmdString} \rangle }
\end{array}
\]
and document type $\type$
\[
\begin{array}{ll}
\set{ \hspace{-2em}& \texttt{id}: \cmdString, \texttt{name}: \cmdString, \texttt{info}: \set{tel: \cmdString}, \\
& \texttt{newField}: \cmdBool, \texttt{sum}: \cmdNum, \\
& \texttt{profs}: \cmdArr \langle \set{\texttt{profId}: \cmdString, \texttt{profName}: \cmdString} \rangle }
\end{array}
\]
Here, $\set{\texttt{name}: \cmdString, \texttt{id}: \cmdString, \texttt{info}: \set{\texttt{tel}: \cmdString}}$ in $\augType$ is matched by $\set{\texttt{name}: \cmdString, \texttt{id}: \cmdString, \texttt{info}: \set{\texttt{tel}: \cmdString}}$ in $\type$, because the corresponding attributes have the same names and types.
$\set{\texttt{?}_3^1: \texttt{Arr} \langle \set{\texttt{profId}: \cmdString, \texttt{profName}: \cmdString} \rangle }$ is matched by $\set{\texttt{profs}: \texttt{Arr} \langle \set{\texttt{profId}: \cmdString, \texttt{profName}: \cmdString} \rangle }$, because 
\texttt{profs} has the same type as placeholder $\hole^1_3$ and $\hole^1_3$ matches exactly one attribute.
Finally, $\set{\hole^+_1: \cmdNum}$ is matched by $\set{\texttt{sum}: \cmdNum}$ because they have the same type, and $\set{\hole^+_2: \cmdAny}$ is matched by $\set{\texttt{newField}: \cmdBool}$ because \cmdAny can match any value type.
\end{example}



To understand why our deduction algorithm is correct, let us consider the following theorem.

\begin{theorem} \label{thm:deduce-soundness}
Given a database schema $\schema$, a sketch $\sketch$, and input-output examples $\vec{\exa}$, if $\textsc{Deduce}(\schema, \sketch, \vec{\exa})$ returns $\bot$, then there is no completion $\query$ of $\sketch$ such that for all $(I, O) \in \vec{\exa}$, $\denot{\query}_I = O$.
\end{theorem}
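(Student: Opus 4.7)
The plan is to prove the statement by contrapositive. Suppose there does exist a completion $\query$ of $\sketch$ such that $\denot{\query}_{I_j} = O_j$ for every example $(I_j, O_j) \in \vec{\exa}$. I will show that \textsc{Deduce}$(\schema, \sketch, \vec{\exa})$ must return $\top$, so the algorithm's final check at Line 7 succeeds for every example.

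First, fix an arbitrary example index $j$. The algorithm constructs the abstract input database
$\absDb_j = \{N_i \mapsto (\textsl{In}(\schema[N_i]),\, l_0 = |I_j[N_i]|) \mid N_i \in \textsl{dom}(\schema)\}$.
The first step of the proof is to verify that $I_j \refines \absDb_j$ in the sense of Definition~\ref{def:db-concretization}. Since the problem assumes $\vdash I_j : \schema$, the typing rule \textrm{T-DB} gives $\vdash I_j[N_i] : \schema[N_i]$ for each $N_i$, so the augmented type of each abstract collection (which contains no placeholders or $\cmdAny$) is precisely the document type of $I_j[N_i]$. The size predicate $l_0 = |I_j[N_i]|$ is satisfied by construction. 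Hence $I_j \refines \absDb_j$.

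Next, let $\type_{Oj} = \textsl{In}(\textsc{ComputeType}(O_j))$, and let $\absSet_j$ be the set produced by $\textsc{Eval}(\absDb_j, \type_{Oj}, \sketch)$. The second step is to observe that $\textsc{Eval}$ is simply a procedural realization of the abstract-semantics judgment, so $\absDb_j, \type_{Oj} \vdash \sketch \Downarrow \absSet_j$. With $I_j \refines \absDb_j$, $\denot{\query}_{I_j} = O_j$, and $\query$ being a completion of $\sketch$, all the hypotheses of Theorem~\ref{thm:abs-soundness} are met. Applying the theorem yields an abstract collection $\absColl_j \in \absSet_j$ with $O_j \refines \absColl_j$.

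The third step is simply to universally quantify: since $j$ was arbitrary, we obtain $\forall j.\,\exists \absColl.\, \absColl \in \absSet_j \land O_j \refines \absColl$, which is exactly the guard of the \textbf{if} at Line 7 of Algorithm~\ref{algo:deduce}. Therefore $\textsc{Deduce}$ returns $\top$, contradicting the assumption. I expect the main obstacle to be the bookkeeping in the first step, namely carefully arguing that the augmented type stored for each collection under $\absDb_j$ really is matched (in the sense of Definition~\ref{def:match}) by the concrete document type of $I_j[N_i]$ given by \textrm{T-DB} and \textrm{T-Array}; since this augmented type has no placeholders or $\cmdAny$ entries, matching collapses to ordinary type equality, which is immediate, but this observation is what bridges Definition~\ref{def:db-concretization} with the typing judgment. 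Everything else is a direct invocation of Theorem~\ref{thm:abs-soundness} and the construction used in the algorithm.
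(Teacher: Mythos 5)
Your proposal is correct and follows essentially the same route as the paper: both reduce the claim to an application of Theorem~\ref{thm:abs-soundness} for each example and then read off the guard at Line 7 of Algorithm~\ref{algo:deduce} (the paper phrases it as a contradiction, you as a contrapositive, which is the same argument). Your first step, explicitly checking $I_j \refines \absDb_j$ so that the hypotheses of Theorem~\ref{thm:abs-soundness} are actually met, is a detail the paper's proof silently assumes, so your write-up is if anything slightly more complete.
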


{
Intuitively, the theorem states that our deduction-based pruning is sound. In other words, if the deduction algorithm returns $\bot$ for a sketch, then no completions of the sketch satisfy all the input-output examples.
}



\subsection{Sketch Completion}
The \textsc{CompleteSketch} takes as input a schema $\schema$, a sketch $\sketch$, and input-output examples $\vec{\exa}$ and returns a query $\query$ satisfying all examples or $\bot$ if such a query does not exist.  We use an enumerative search algorithm to fill unknowns in the sketch according to the query operators.
\begin{enumerate}[leftmargin=*]
\item \cmdProject.
We compute the common attributes in the input and output and use these common attributes as arguments.
\item \cmdMatch.
We enumerate all predicates obtained from a combination of access paths, constants, comparisons, and logic connectives. Also, the observational equivalent class is used to avoid duplicate predicates.
\item \cmdAddFields.
We enumerate all possible expressions for newly generated attributes.
\item \cmdUnwind. We enumerate all array attributes in the top level of the document and unwind them.
\item \cmdGroup. 
We enumerate all group keys and accumulators and use value-based analysis to prune impossible accumulators.
\item \cmdLookup. We enumerate all foreign collections and their attributes as arguments.
\end{enumerate}

In addition, we also perform type checking to prune impossible arguments. For instance, if the value for a newly generated attribute has a different type than it should be in the output, we prune this completion from the search space.


{
We now conclude this section with two theorems about the overall synthesis algorithm.
}

\begin{theorem}[Soundness] \label{thm:soundness}
Let $\schema$ be a database schema, $\vec{\exa}$ be input-output examples, and $N$ be a collection name. Suppose \textsc{CompleteSketch} is sound, if $\textsc{Synthesize}(\schema, \vec{\exa}, N)$ returns a query $\query$, then $\query$ satisfies examples $\vec{\exa}$.
\end{theorem}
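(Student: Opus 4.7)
The plan is to trace through Algorithm~\ref{algo:synthesis} and observe that the only path on which \textsc{Synthesize} returns a non-$\bot$ value is Line~7, where it returns a query $\query$ produced by \textsc{CompleteSketch}. Since the theorem explicitly assumes \textsc{CompleteSketch} is sound, meaning that whenever $\textsc{CompleteSketch}(\schema, \sketch, \vec{\exa})$ returns some $\query \ne \bot$ we have $\denot{\query}_I = O$ for every $(I,O) \in \vec{\exa}$, the conclusion of the theorem follows directly.

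More concretely, I would first formalize what \emph{soundness of} \textsc{CompleteSketch} means as a hypothesis: for every sketch $\sketch$, every schema $\schema$, and every $\vec{\exa}$, if $\textsc{CompleteSketch}(\schema, \sketch, \vec{\exa}) = \query \ne \bot$ then $\query$ is a completion of $\sketch$ and $\denot{\query}_{I} = O$ for all $(I, O) \in \vec{\exa}$. Then I would do a case analysis on the return statements of Algorithm~\ref{algo:synthesis}: the only return statements are Line~7 (returning $\query$) and the final return (returning $\bot$). Since the theorem's hypothesis is that \textsc{Synthesize} returns a query (i.e., not $\bot$), the only possibility is that the algorithm exits through Line~7 on some iteration with some sketch $\sketch$. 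At that point the guard on Line~6 has already succeeded, so $\query \ne \bot$, and by the assumed soundness of \textsc{CompleteSketch} we conclude $\denot{\query}_I = O$ for every $(I, O) \in \vec{\exa}$.

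I would also briefly note what the proof does \emph{not} need to rely on: neither the soundness of \textsc{Deduce} (Theorem~\ref{thm:deduce-soundness}) nor the soundness of the abstract semantics (Theorem~\ref{thm:abs-soundness}) is required for Theorem~\ref{thm:soundness}. Those results are needed only for \emph{completeness} of pruning (they justify that skipping sketches rejected by \textsc{Deduce} does not rule out correct queries), whereas overall soundness relies exclusively on the fact that a returned query must have been certified by \textsc{CompleteSketch}.

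There is essentially no hard step in this proof; it is a one-line structural argument about the control flow of Algorithm~\ref{algo:synthesis}. The only mildly delicate aspect is stating the hypothesis about \textsc{CompleteSketch} precisely enough so that the conclusion really does follow — in particular, making clear that the soundness assumption quantifies over all sketches and example sets that may arise during the execution of the worklist loop, so that the specific invocation on Line~6 is covered.
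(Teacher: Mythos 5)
Your proof is correct and follows essentially the same argument as the paper's: the only non-$\bot$ return path in Algorithm~\ref{algo:synthesis} goes through \textsc{CompleteSketch}, whose assumed soundness immediately yields the conclusion. Your explicit remark that neither Theorem~\ref{thm:abs-soundness} nor Theorem~\ref{thm:deduce-soundness} is needed is a minor (and accurate) clarification; the paper's own proof includes a superfluous appeal to \textsc{Deduce} returning $\top$, but the core reasoning is identical.
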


\begin{theorem}[Completeness] \label{thm:completeness}
Let $\schema$ be a database schema, $\vec{\exa}$ be input-output examples, and $N$ be a collection name. Suppose \textsc{CompleteSketch} is complete, if there exists a query accepted by the grammar in Figure~\ref{fig:syntax} that is over collection $N$ and satisfies examples $\vec{\exa}$, then $\textsc{Synthesize}(\schema, \vec{\exa}, N)$ does not return $\bot$.
\end{theorem}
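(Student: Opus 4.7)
Assume a witness query $\query^*$ over $N$ in the grammar of Figure~\ref{fig:syntax} satisfies every $(I_j, O_j) \in \vec{\exa}$. The strategy is to exhibit a sketch $\sketch^* = \textsl{Skel}(\query^*)$ obtained from $\query^*$ by replacing every non-collection argument (access paths, predicates, expressions, aggregation specifications, and foreign collection/attribute names other than the root $N$) with fresh unknowns, and then to argue three facts about the iteration in which $\sketch^*$ is dequeued from $\wl$: (i) $\sketch^*$ is in fact reached; (ii) \textsc{Deduce} returns $\top$ on $\sketch^*$; and (iii) \textsc{CompleteSketch} on $\sketch^*$ returns a non-$\bot$ query. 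Together these force \textsc{Synthesize} to return at that iteration or earlier, hence never to return $\bot$.

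\textbf{Reaching $\sketch^*$.} I would prove by structural induction on $\query^*$ that the single-step relation induced by \textsc{Refine} admits a finite path from the initial sketch $N$ to $\sketch^*$. The base case $\query^* = N$ is immediate, since $\sketch^* = N$ is the initial content of $\wl$. In the inductive case $\query^* = \textsf{Op}(\query', \ldots)$, the induction hypothesis gives a path $N \to \cdots \to \textsl{Skel}(\query')$, and one further application of \textsc{Refine} substitutes the innermost $N$-leaf of $\textsl{Skel}(\query')$ by $\textsf{Op}(N, \ldots)$ --- exactly one of the six shapes listed in the definition of \textsc{Refine} --- yielding $\sketch^*$. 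Combined with a fairness argument on the worklist (e.g., FIFO discipline, or a well-founded ordering under which each enqueued sketch is eventually processed), this ensures $\sketch^*$ is eventually dequeued, unless a correct query has already been returned.

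\textbf{Deduction and completion.} At that iteration, let $\absDb_j$ and $\type_{O_j}$ be the abstractions built by \textsc{Deduce} from each $(I_j, O_j)$. By construction $I_j \refines \absDb_j$, and by hypothesis $\query^*$ is a completion of $\sketch^*$ with $\denot{\query^*}_{I_j} = O_j$. Theorem~\ref{thm:abs-soundness} then furnishes an abstract collection $\absColl \in \absSet_j$ with $O_j \refines \absColl$ for every $j$, so \textsc{Deduce} returns $\top$. Since $\query^*$ itself is a completion of $\sketch^*$ that satisfies $\vec{\exa}$, the assumed completeness of \textsc{CompleteSketch} guarantees that it returns some (possibly different) query, not $\bot$, so \textsc{Synthesize} itself returns a non-$\bot$ value.

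\textbf{Main obstacle.} The principal hurdle is the fairness argument in the reachability step: one must rule out the possibility that \textsc{Refine} enqueues infinitely many other sketches ahead of $\sketch^*$, indefinitely deferring it. This is resolved by fixing a FIFO (or otherwise fair) queue discipline on $\wl$ and observing that the inductive chain from $N$ to $\sketch^*$ has a fixed finite length, so only finitely many sketches of strictly smaller refinement depth can precede $\sketch^*$. Once this bookkeeping is pinned down, Theorem~\ref{thm:abs-soundness} and the assumed completeness of \textsc{CompleteSketch} carry out the rest of the argument mechanically.
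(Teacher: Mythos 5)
Your proof takes essentially the same route as the paper's: exhibit the skeleton sketch of the witness query $\query^*$, show that \textsc{Deduce} accepts it (you invoke Theorem~\ref{thm:abs-soundness} directly; the paper cites the contrapositive of Theorem~\ref{thm:deduce-soundness}, whose own proof reduces to Theorem~\ref{thm:abs-soundness}, so these are the same argument), and then appeal to the assumed completeness of \textsc{CompleteSketch}. You are in fact more careful than the paper, whose proof simply asserts that $\query$ ``must be a completion of a sketch $\sketch$'' and ``will be returned'' without ever arguing that the worklist dequeues that sketch; your reachability-plus-fairness step is exactly the missing bookkeeping. One detail in that step is off, however: \textsc{Refine} substitutes the innermost $N$-leaf, so it grows a sketch from the \emph{outside in} --- the first operator attached is the outermost one of the target query. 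If $\query^* = \textsf{Op}(\query', \ldots)$, then $\textsl{Skel}(\query')[\textsf{Op}(N,\ldots)/N]$ places $\textsf{Op}$ innermost and is \emph{not} equal to $\textsl{Skel}(\query^*) = \textsf{Op}(\textsl{Skel}(\query'), \ldots)$; e.g.\ refining $\cmdMatch(N, \pred)$ with $\cmdProject$ yields $\cmdMatch(\cmdProject(N, \vec{h}), \pred)$, not $\cmdProject(\cmdMatch(N,\pred), \vec{h})$. The induction should instead peel the \emph{innermost} operator of $\query^*$: writing $\query^*$ as the chain $\textsf{Op}_1(\cdots\textsf{Op}_k(N)\cdots)$, the refinement path is $N$, then $\textsf{Op}_1(N,\ldots)$, then $\textsf{Op}_1(\textsf{Op}_2(N,\ldots),\ldots)$, and so on, so the $i$-th sketch on the path carries the outermost $i$ operators. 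With that correction --- together with the finite-branching/FIFO observation you already make (each \textsc{Refine} enqueues six sketches, so only finitely many sketches of depth at most $k$ precede $\sketch^*$) --- your argument is complete and subsumes the paper's.
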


{
Intuitively, the soundness theorem states that if the synthesis algorithm returns a query, then the query satisfies all input-output examples. The completeness theorem ensures that if there exists a query in our language satisfying all input-output examples, then the synthesis algorithm can find a query.
}

\section{Implementation} \label{sec:impl}
We have implemented the proposed synthesis technique in a tool called \tool and use Z3~\cite{z3-tacas08} as the SMT solver.

\bfpara{Heuristics for sketch completion.}
Based on the observation that most \cmdGroup operators do not have more than two group keys, we limit the number of group keys to two during sketch completion.
In addition, although \tool supports simple constants (e.g., null) in sketch completion, it expects the user to provide more complicated constants such as string literals.



\bfpara{Translation to MongoDB queries.}
\tool performs syntax-directed translation to transform the document database query in its domain-specific language to the MongoDB query language. Furthermore, it also performs optimizations to improve the conciseness and efficiency of translated queries, such as merging continuous \cmdAddFields and \cmdProject operators.


\section{Evaluation} \label{sec:eval}

In this section, we present several experiments that are designed to answer the following research questions.
\begin{enumerate}[leftmargin=3em]
\item[\textbf{RQ1.}]
Is \tool effective and efficient to synthesize document database queries from input-output examples?
\item[\textbf{RQ2.}]
How does each component of the collection abstraction affect synthesis time?
\item[\textbf{RQ3.}]
How does \tool compare against other baseline synthesizers?
\item[\textbf{RQ4.}]
How does the collection size of input-output examples impact the performance of \tool?
\end{enumerate}

\bfpara{Experimental setup.}
All experiments are conducted on a machine with an Intel i9-13905H CPU and 32 GB of physical memory, running the Ubuntu 22.04 WSL2 operating system.

\subsection{Benchmarks}

\begin{table}[!t]
\caption{Statistics of datasets. \textbf{\#n} is the number of benchmarks. \textbf{\#a}, \textbf{\#d}, \textbf{\#e}, \textbf{\#i}, \textbf{\#o}, \textbf{\#c} denote the average number of document attributes, document depths, examples, collection sizes in input and output examples, and constants, respectively.}
\label{table:benchmark}
\centering
\begin{tabular}{|c|c|c|c|c|c|c|c|}
\hline
\textbf{dataset} &\textbf{\#n} &  \textbf{\#a} & \textbf{\#d} & \textbf{\#e} & \textbf{\#i} & \textbf{\#o} &\textbf{\#c}\\
\hline
StackOverflow & 33 & 4.9 & 1.5 & 2.5 & 2.4 & 1.4 & 0.9\\
\hline
MongoDB Document & 26 & 5.4 & 1.4 & 1.1 & 4.7 & 2.6 & 0.6\\
\hline
Twitter API & 5 & 18.4 & 2.6 & 1.0 & 2.0 & 2.6 & 0.0\\
\hline
Kaggle & 46 & 19.8 & 4.1 & 1.0 & 1.8 & 3.6 & 0.5\\
\hline
\textbf{Total} & 110 & 11.9 & 2.6 & 1.5 & 2.6 & 2.3 & 0.6\\
\hline
\end{tabular}
\end{table}

\begin{table}[!t]
\caption{Statistics of ground truth queries. \textbf{\#s}, \textbf{\#op}, \textbf{\#P}, \textbf{\#M}, \textbf{\#L}, \textbf{\#U}, \textbf{\#G}, \textbf{\#A} denote the number of AST nodes, query operators, \cmdProject, \cmdMatch, \cmdLookup, \cmdUnwind, \cmdGroup, \cmdAddFields, respectively.}
\label{table:ground-truth-query}
\centering
    \begin{tabular}{|p{4em}|p{1.5em}|p{1.5em}|p{1.5em}|p{1.5em}|p{1.5em}|p{1.5em}|p{1.5em}|p{1.5em}|p{1.5em}|}
        \hline
        \textbf{dataset} & & \textbf{\#s} & \textbf{\#op} & \textbf{\#P} & \textbf{\#M} & \textbf{\#L} & \textbf{\#U} & \textbf{\#G} & \textbf{\#A} \\ \hline
        \multirow{4}{=}{Stack-Overflow} & avg & 12 & 1.88 & 0.42 & 0.79 & 0.03 & 0.27 & 0.36 & 0 \\
        & med & 10 & 1 & 0 & 1 & 0 & 0 & 0 & 0 \\
        & min & 4 & 1 & 0 & 0 & 0 & 0 & 0 & 0 \\
        & max & 33 & 5 & 1 & 2 & 1 & 2 & 2 & 0 \\ \hline
        \multirow{4}{=}{Official Document} & avg & 8 & 1.15 & 0.31 & 0.42 & 0.04 & 0.08 & 0.27 & 0.04 \\
        & med & 7 & 1 & 0 & 0 & 0 & 0 & 0 & 0 \\
        & min & 4 & 1 & 0 & 0 & 0 & 0 & 0 & 0 \\
        & max & 17 & 3 & 1 & 1 & 1 & 2 & 1 & 1 \\ \hline
        \multirow{4}{=}{Twitter API} & avg & 16 & 2.6 & 0.8 & 0 & 0 & 1 & 0.6 & 0.2 \\
        & med & 14 & 2 & 1 & 0 & 0 & 1 & 1 & 0 \\
        & min & 9 & 2 & 0 & 0 & 0 & 0 & 0 & 0 \\
        & max & 26 & 4 & 1 & 0 & 0 & 2 & 1 & 1 \\ \hline
        \multirow{4}{=}{Kaggle} & avg & 13 & 3.2 & 0.7 & 0.52 & 0 & 1.54 & 0.43 & 0 \\
        & med & 12 & 3 & 1 & 0 & 0 & 2 & 0 & 0 \\
        & min & 8 & 2 & 0 & 0 & 0 & 0 & 0 & 0 \\
        & max & 27 & 6 & 1 & 2 & 0 & 3 & 2 & 0 \\ \hline
        \multirow{4}{=}{\textbf{Total}} & avg & 12 & 2.29 & 0.53 & 0.55 & 0.02 & 0.79 & 0.38 & 0.02 \\
        & med & 11 & 2 & 1 & 1 & 0 & 1 & 0 & 0 \\
        & min & 4 & 1 & 0 & 0 & 0 & 0 & 0 & 0 \\
        & max & 33 & 6 & 1 & 2 & 1 & 3 & 2 & 1 \\ \hline

    \end{tabular}
\vspace{-10pt}
\end{table}

We have collected 110 benchmarks from 4 representative sources, i.e., StackOverflow, MongoDB official document, Twitter API documents, and Kaggle competitions, which cover a wide spectrum of realistic scenarios.


{
\begin{itemize}[leftmargin=*]
\item StackOverflow.
The StackOverflow dataset is adapted from StackOverflow posts where developers ask about real-world problems. Each post in our dataset has 453K visits, 4 answers, and 127 votes on average, which demonstrates these queries attract lots of attention from the community. Most of the the examples and constants are extracted from the post content. If some post does not provide enough examples, we add the examples.

\item MongoDB Document.
The MongoDB official documents cover a representative set of queries that the MongoDB community believes are commonly used in practice. The examples and constants are all collected from the example section of official documents.

\item Twitter API.
The Twitter dataset consists of tweets and user replies which mainly focus on calculating tweet statistics, such as the count of replies. The benchmarks represent typical scenarios for data analysts to get information from social networks and online forums. The examples are collected from the response data of APIs.

\item Kaggle.
The Kaggle dataset contains information about satellite images, where benchmarks reflect scenarios for scientific research, such as extracting different labels for training machine learning models and collecting statistics. The examples are sampled from the provided JSON file.

\end{itemize}
}

{
Table~\ref{table:benchmark} summarizes the statistics of these datasets. Among these datasets, Twitter API and Kaggle benchmarks are more complex than StackOverflow and MongoDB Document in terms of the number of attributes, collection sizes, etc.


To further understand the complexity of benchmarks, we have also collected the statistics on the ground truth queries in Table \ref{table:ground-truth-query}.
The maximum AST size of ground truth queries is 33 among all benchmarks, and the average is 12. Over half of the ground truth queries have an AST size larger than 10. This indicates a high level of complexity, as longer queries typically require synthesizers to explore a larger search space. 
Furthermore, the number of operators (or pipeline stages) in a single query ranges from 1 to 6. Frequently occurring operators include \cmdProject, \cmdMatch, \cmdUnwind, and \cmdGroup. Notably, \cmdUnwind and \cmdGroup pose significant challenges for synthesis, as they can substantially change the structure of collections and documents.
In contrast, the \cmdLookup operator appears infrequently in ground truth queries. This is consistent with the typical usage of document databases where users try to avoid ``join'' operations between multiple collections. Similarly, the \cmdAddFields operator is also not used frequently in our datasets. 

}


\subsection{Effectiveness and Efficiency}

\begin{table}[!t]
\caption{Evaluation results for \tool. \textbf{\#n} and \Checkmark denote the number of benchmarks and solved benchmarks, and \textbf{time} indicates the time (in seconds) to solve benchmarks. \textbf{\#$\sketch$}, \textbf{\#$\prog$}, \textbf{\#size} refer to the number of sketches, complete programs, and AST nodes of synthesized programs, respectively.}
\label{table:main-eval}
\centering
    \begin{tabular}{|p{4em}|c|c|c|c|c|c|c|}
        \hline
        \textbf{dataset} & \textbf{\#n} & \Checkmark & & \textbf{time (s)} & \textbf{\#$\sketch$} & \textbf{\#$\prog$} & \textbf{\#size}\\ \hline
        \multirow{4}{=}{Stack-Overflow} & \multirow{4}{*}{33} & \multirow{4}{*}{33} & avg & 9.2 & 86 & 31 & 12 \\
        & & & med & 2.6 & 15 & 6 & 11 \\
        & & & min & 0.5 & 2 & 1 & 4 \\
        & & & max & 184.5 & 854 & 308 & 32 \\ \hline
        \multirow{4}{=}{MongoDB Document} & \multirow{4}{*}{26} & \multirow{4}{*}{26} & avg & 5.7 & 11 & 53 & 9 \\
        & & & med & 1.1 & 6 & 14 & 10  \\
        & & & min & 0.5 & 2 & 1 & 4  \\
        & & & max & 78.6 & 124 & 576 & 19  \\ \hline
        \multirow{4}{=}{Twitter API} & \multirow{4}{*}{5} & \multirow{4}{*}{5} & avg & 10.5 & 81 & 61 & 15  \\
        & & & med & 10.1 & 36 & 81 & 15  \\
        & & & min & 1.8 & 15 & 1 & 9  \\
        & & & max & 19.9 & 165 & 131 & 22  \\ \hline
        \multirow{4}{=}{Kaggle} & \multirow{4}{*}{46} & \multirow{4}{*}{44} & avg & 23.4 & 350 & 79 & 16  \\
        & & & med & 6.8 & 160 & 4 & 13  \\
        & & & min & 1.0 & 8 & 1 & 8  \\
        & & & max & 201.6 & 3975 & 1235 & 38  \\ \hline
        \multirow{4}{=}{\textbf{Total}} & \multirow{4}{*}{110} & \multirow{4}{*}{108} & avg & 14.2 & 175 & 57 & 13  \\
        & & & med & 3.2 & 31 & 7 & 11  \\
        & & & min & 0.5 & 2 & 1 & 4  \\
        & & & max & 201.6 & 3975 & 1235 & 38  \\ \hline
    \end{tabular}
\vspace{-10pt}
\end{table}




The evaluation results and the statistics of synthesized programs are presented in Table~\ref{table:main-eval}. Given a time limit of 5 minutes, \tool can solve 108 out of 110 benchmarks and only gets timeout on two challenging benchmarks (both in Kaggle). Note that the ground-truths of these two benchmarks are more complex than the others from our manual inspection. This serves as evidence of the effectiveness of \tool in synthesizing document database queries from examples. Further, \tool can solve most benchmarks in an average of {14.2} seconds as shown in Table~\ref{table:main-eval}.
Furthermore, observing the number of sketches \#$\sketch$ and complete programs \#$\prog$, \tool iterates over 175 sketches but only completes {57} full programs on average. It demonstrates that our synthesis technique based on collection abstractions is efficient in pruning infeasible sketches and thus speeds up the synthesis process.

\bfpara{Qualitative analysis.} 
We observe that the number of attributes in the document, the depth of the document, the number of constants, and the query complexity affect the synthesis time.
For instance, the Kaggle dataset needs longer synthesis time than others because the benchmark has a large number of attributes and the documents are deeply nested.
In general, more complex queries need the synthesizer to iterate more sketches. More attributes, deeper nesting, and more constants require enumerating more queries while completing the sketch.


\bfpara{Non-desired programs.} 
To understand if \tool can synthesize desired queries, we have manually inspected all 108 synthesized queries and found 107 of them are equivalent to the desired ones. There is only one benchmark (from StackOverflow) where \tool synthesized a plausible query in terms of the example but the query is not desired.
The reason is that this benchmark involves a complex predicate that requires numerous unseen examples to eliminate mismatch cases. However, only a few examples are provided on the StackOverflow post, so \tool cannot find the desired predicate but synthesize an alternative satisfying the examples.

\textbox{\textbf{Answer to RQ1}: \tool successfully synthesizes 108 out of 110 benchmarks from examples and the average synthesis time is {14.2} seconds.}

\subsection{Ablation Study}

\begin{figure}[!t]
    \centering
    \includegraphics[width=\linewidth]{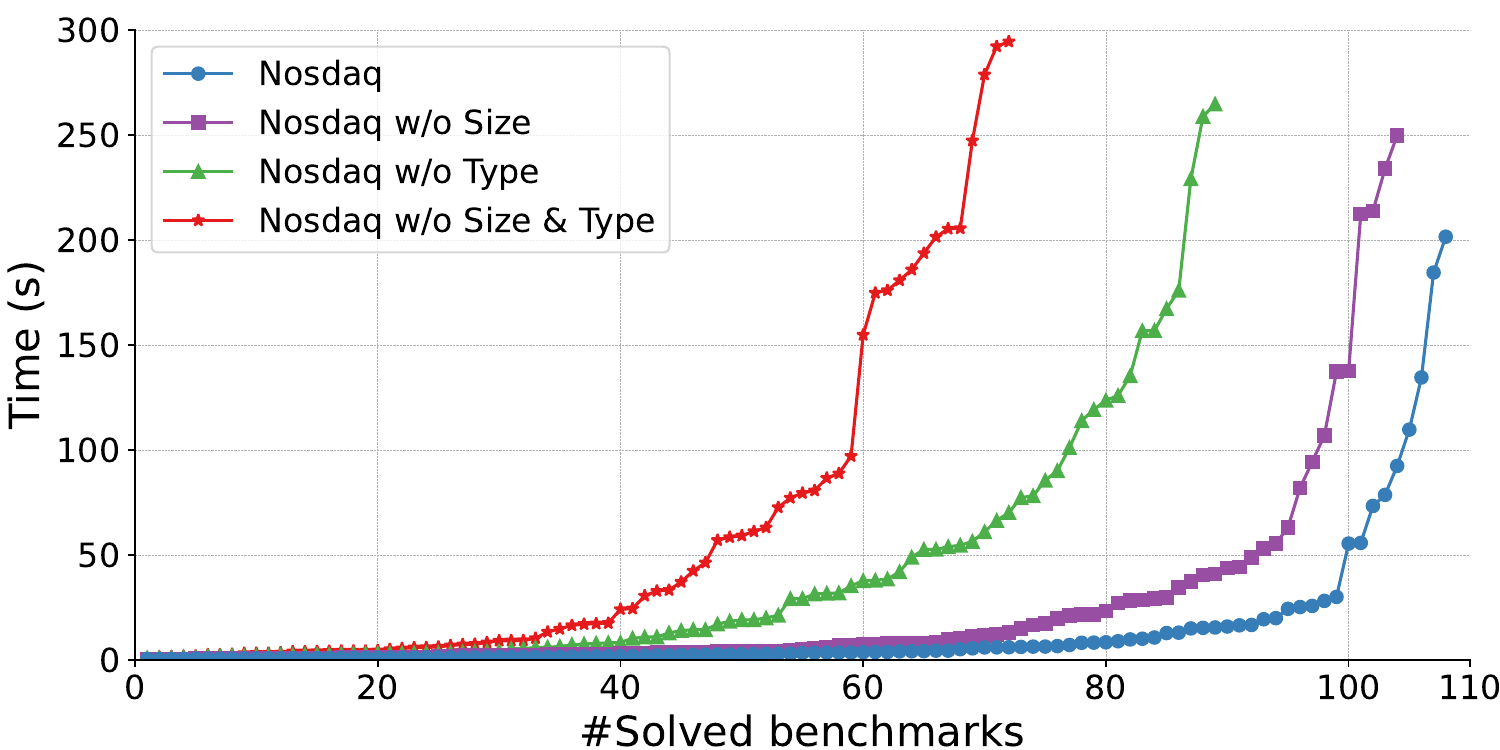}
    \caption{Ablation study.}
    \label{fig:ablation-study}
    \vspace{-10pt}
\end{figure}



To understand how the type and size information in collection abstractions may affect the efficiency, we perform an ablation study. Specifically, we have created three variants of \tool that disable (1) the size information, (2) the type information, and (3) both size and type information in the abstraction. We run all these variants on the 110 benchmarks and obtain the result shown in Figure~\ref{fig:ablation-study}, where a point $(x, y)$ means a variant can synthesize $x$ benchmarks and the time for each benchmark is within $y$ seconds.
As shown in the figure, without size in the abstraction, the variant times out on {4} more benchmarks and requires approximately {10} seconds longer on average. Without type, the variant triggers timeout on {19} more benchmarks and requires around {27} seconds longer on average to complete the synthesis process.
This indicates that the document type in the collection abstraction significantly improves the synthesis time.


\textbox{\textbf{Answer to RQ2}: Both type and size information can make \tool more efficient but the former is more significant.}

\subsection{Comparison with Baselines}


\begin{figure}[!t]
    \centering
    \includegraphics[width=\linewidth]{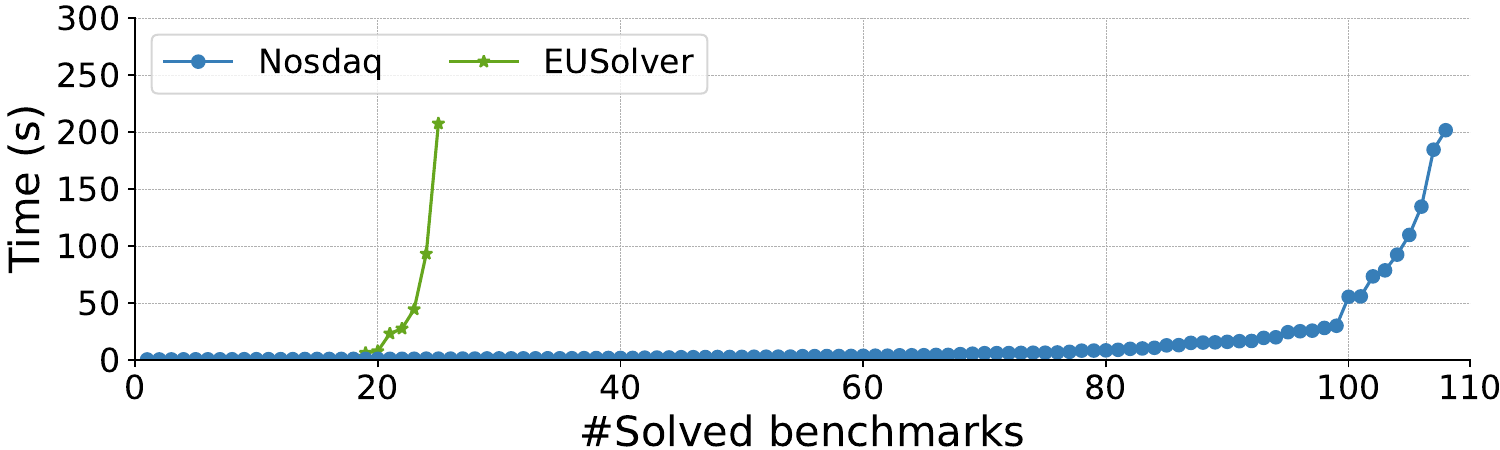}
    \caption{Comparison between \tool and \eusolver.}
    \label{fig:baseline}
    \vspace{-15pt}
\end{figure}


{
To compare \tool with a baseline, we have instantiated the \eusolver framework~\cite{eusolver-tacas17} to synthesize document database queries from examples. 
As a generic solver, \eusolver can be easily extended to support documents and collections in the specification, since it provides necessary support for lists and maps. Secondly, \eusolver remains a competitive baseline in program synthesis, as evidenced from recent work~\cite{BarnabyCSD23, NazariHSRR23, JiKXH23}.
As shown in Figure~\ref{fig:baseline}, as opposed to 108 benchmarks solved by \tool, \eusolver can only solve {25} benchmarks within the 5-minute time limit due to the large search space of document database queries in general.

}



{
To compare \tool with the LLM-based approach, we have used ChatGPT (version gpt-4o-2024-08-06) to synthesize all of our 110 benchmarks. Specifically, we have used the same set of input-output examples and constants in each benchmark and asked ChatGPT to generate MongoDB queries. To make fair comparisons, we did not provide additional natural language descriptions about what the query should do. The evaluation shows that GPT can only generate the desired query for 53 out of 110 benchmarks. For 24 benchmarks, the generated query is plausible but undesired, i.e., it is consistent with the examples but not equivalent to the desired one. For the remaining 33 benchmarks, the generated query is inconsistent with the input-output examples. The errors made by GPT include misunderstanding the semantics of operators, missing predicates, etc. Recall that \tool can synthesize desired queries for 107 benchmarks and plausible but undesired query for 1 benchmark. We believe our synthesis technique is more effective and generalizable than GPT to synthesize document database queries from examples.
}

\textbox{\textbf{Answer to RQ3}: \tool can solve 108 out of 110 benchmarks, whereas \eusolver can only solve {25} benchmarks, {and ChatGPT-4o can solve 77 benchmarks.}}

\subsection{Impact of Collection Size}
\begin{figure}[!t]
    \centering
    \includegraphics[width=\linewidth]{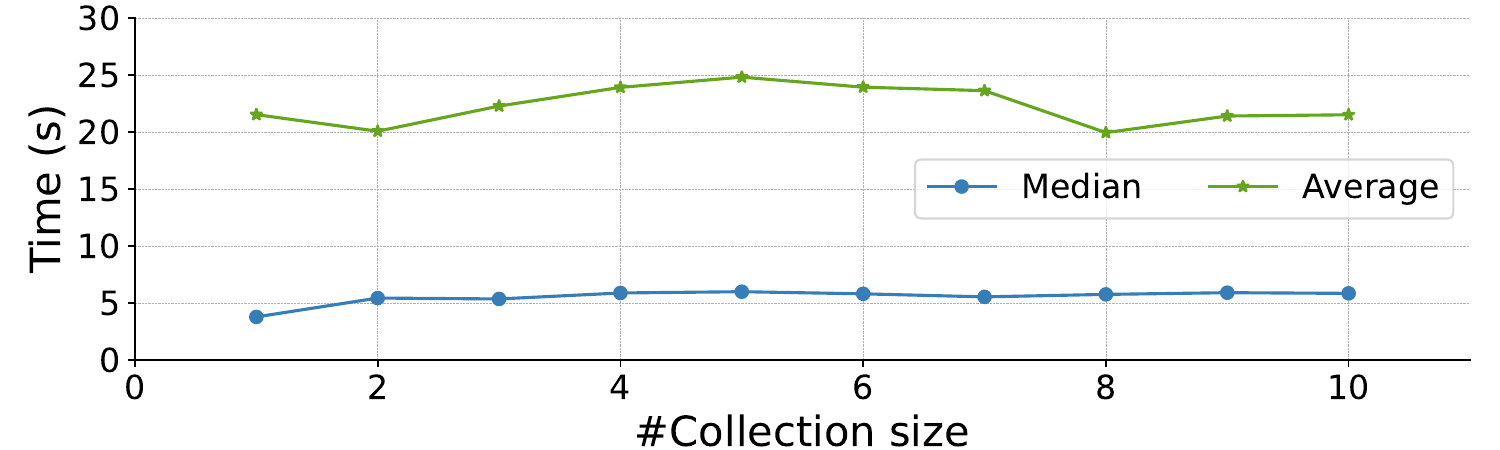}
    \caption{Impact of collection size on synthesis time.}
    \label{fig:collection-size-impact-time}
    \vspace{-10pt}
\end{figure}

\begin{figure}[!t]
    \centering
    \includegraphics[width=\linewidth]{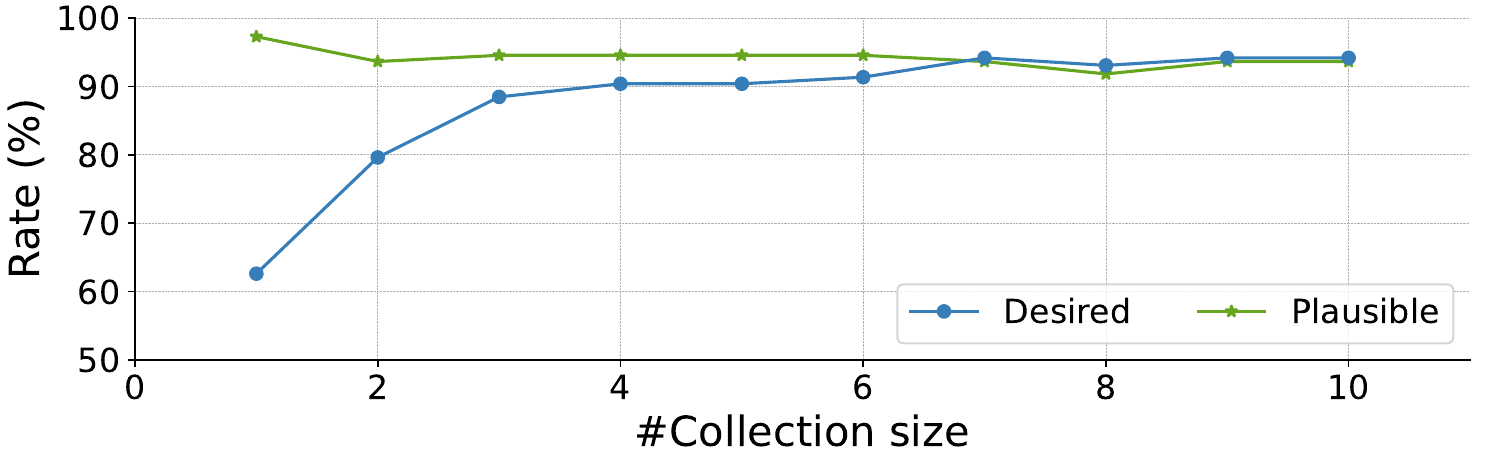}
    \caption{Impact of collection size on rates of plausible and desired queries.}
    \label{fig:collection-size-impact-rate}
    \vspace{-10pt}
\end{figure}

To analyze the impact of collection size on the performance of \tool, we have conducted experiments across all 110 benchmarks to evaluate how different collection sizes influence \tool's behavior. Specifically, we sampled 10 documents for each collection and ran \tool on variants with collection sizes ranging from 1 to 10 documents. The impact on synthesis time is presented in Figure~\ref{fig:collection-size-impact-time}, while the impact on rates of plausible and desired queries are shown in Figure~\ref{fig:collection-size-impact-rate}.

The plausible rate is defined as the ratio of benchmarks synthesized within a 5-minute time limit to the total number of benchmarks. The desired rate represents the ratio of benchmarks for which the synthesized query is equivalent to the desired one to the total number of synthesized benchmarks.

As shown in the figures, the synthesis time of \tool remains relatively insensitive to changes in collection size within the range of 1 to 10 documents in each collection. Similarly, the plausible rate also remains stable. In contrast, the desired rate shows a significant increase when the collection size grows from 1 to 3, after which it stabilizes. This can be attributed to the fact that smaller collection sizes provide insufficient examples to synthesize the desired query, leading to simpler queries that are plausible but not desired.

\textbox{\textbf{Answer to RQ4}: The synthesis time of \tool demonstrates minimal sensitivity to changes in collection size. The rate of synthesizing a desired query increases rapidly as the collection size grows from 1 to 3 and stabilizes thereafter.}

\subsection{Threats to Validity}



First, although we believe our datasets are representative, which are obtained from various real-world scenarios, our evaluation results are limited to the collected datasets. The \tool tool might perform differently on other datasets.
Second, our domain-specific language only corresponds to a core subset of the MongoDB aggregation pipeline. While it is convenient to extend the abstract semantics to other query operators, the performance of the tool might be different due to the change in SMT formulas for symbolic reasoning.
Third, all the experiments are conducted on a machine as specified in Section~\ref{sec:eval}. Running the experiments on a different machine may yield different results.

\section{Related Work} \label{sec:related}

\bfpara{Program synthesis for software engineering.}
Program synthesis techniques have been applied to address various software engineering problems, such as program refactoring~\cite{Raychev-oopsla13,soar-icse21,solidare-oopsla22,revamp-popl24}, program repair~\cite{spr-fse15,angelix-icse16,Xiong-icse17,pyter-fse22}, code completion~\cite{codehint-icse14,slang-pldi14}, software testing~\cite{syrust-pldi21,javatailor-icse22}, and so on.
This paper focuses on the topic of generating document database queries from input-output examples.

\vspace{-2pt}
\bfpara{Synthesizing database queries.}
Among related papers, the most related is a body of work on synthesizing database queries. \sqlsynthesizer~\cite{SQLSynthesizer-ase2013}, \scythe~\cite{scythe-pldi2017} and \patsql~\cite{patsql-vldb2021} synthesize SQL queries for relational databases from examples, while \sickle~\cite{sickle-pldi2022} synthesizes analytical SQL queries given computation demonstrations. \sqlizer~\cite{sqlizer-oopsla2017} considers nature language description as the specification for SQL query synthesis. However, none of the prior work can synthesize queries of document databases such as MongoDB.

\vspace{-2pt}
\bfpara{Synthesis with deduction.}
A line of work performs deduction to prune infeasible programs in program synthesis~\cite{morpheus-pldi17, neo-pldi2018, concord-cav2020, DryadStnth-pldi2020, ngds-iclr2018, regel-pldi2020, lambad2-pldi2015, FlashExtract-pldi2014, myth-pldi2015, FlashMeta-oospla2015}. For example, \morpheus~\cite{morpheus-pldi17} and \neo~\cite{neo-pldi2018} utilize SMT-based deduction that generates formulas based on semantics and input-output examples to prune infeasible programs. \ngds~\cite{ngds-iclr2018} and \concord~\cite{concord-cav2020} combine deduction and machine learning techniques to prune the search space. 
{
\tool adapts the high-level approach of \morpheus~\cite{morpheus-pldi17} and \neo~\cite{neo-pldi2018} to the setting of document database queries. However, \morpheus mainly focuses on tabular data, whereas \tool focuses on hierarchical data. The abstraction used by \morpheus is related to the number of rows and columns of tables. This abstraction cannot be directly used for deduction in a synthesizer that aims to generate document database queries, because these queries operate over more involved hierarchical data. Therefore, \tool uses the novel abstraction consisting of hierarchically nested types for its documents and the collection size, which is one of the main contributions of this paper.
}

\vspace{-2pt}
\bfpara{Synthesis with abstraction.}
Another line of related work is to synthesize programs using abstractions~\cite{blaze-popl2017, simpl-sas17, tygar-popl2019, Mell-popl24, Vechev-popl10}.
For example, \simpl~\cite{simpl-sas17} uses abstract interpretation to guide the synthesis of imperative programs from examples. Mell et al.~\cite{Mell-popl24} also use abstract interpretation for optimal program synthesis.
\blaze~\cite{blaze-popl2017} constructs and iteratively refines the abstract finite tree automata that represent a set of programs. This approach iteratively prunes and refines automata when the corresponding programs do not satisfy examples, until a correct program is found. Unlike prior techniques, \tool employs a novel collection abstraction to represent complex hierarchical data (e.g., BSON) and uses abstract semantics to rule out infeasible sketches representing a large set of programs.

\vspace{-2pt}
\bfpara{Wrangling semi-structured data.}
Various techniques have been proposed to wrangle semi-structured data, such as JSON, XML documents, spreadsheets, and log files. For example, there is a line of work~\cite{xmldb-sigmod2001, xmldb-sigmod2002, xmldb-widm2004, xmldb-is2007} that aims to map XML documents to relational data for query processing. \datamaran~\cite{datamaran-sogmod2018} converts the semi-structured log into a structured relational format. \flashExtract~\cite{FlashExtract-pldi2014} extracts relevant data from text files, websites, and spreadsheets.
\flashRelate~\cite{FlashRelate-plid2015} extracts relational data from semi-structured spreadsheets by examples.
\treex~\cite{treex-wsdm17} synthesizes extractors for real-world large-scale websites.
Since document databases store semi-structured data by nature, \tool can also be viewed as a query synthesizer over semi-structured data. However, different from prior work, \tool focuses on core query language of document databases and aims to address significant challenges raised by specialized operators such as \cmdGroup, \cmdUnwind, and \cmdLookup.  


\vspace{-2pt}
\bfpara{Synthesizing data transformation scripts.}
Many synthesizers aim to automatically generate data transformation scripts from high-level specifications~\cite{hades-pldi2016,mitra-vldb2018,morpheus-pldi17,autopandas-oopsla2019,Flashfill-popl2011,wrex-chi2020,ProgFromEx-pldi2011,Trinity-vldb2019}. For example, \hades~\cite{hades-pldi2016} synthesizes scripts to handle hierarchically structured data such as file systems, XML, and HDF files.
\mitra~\cite{mitra-vldb2018} aims to synthesize scripts to convert hierarchical data into relational tables.
\dynamite~\cite{dynamite-vldb2020} transforms data between various types of databases by synthesizing Datalog programs.
In contrast, \tool is designed to handle complex data structures in document databases and leverage collection abstractions to efficiently synthesize queries from examples, which is beyond the capability of prior work. For instance, \hades focuses on structure changes in the transformation but does not support aggregations, but \tool can synthesize aggregate queries with \cmdGroup operations.


\section{Conclusion} \label{sec:concl}

This paper presents a technique that automatically synthesizes document database queries from input-output examples. To achieve better performance, we develop a novel abstraction for collections containing hierarchical and nested data structures and leverage this abstraction for deduction to prune the search space of target queries. An evaluation of 110 benchmarks from various sources demonstrates our technique is effective and efficient in solving 108 benchmarks.

\balance
\bibliographystyle{plain}
\bibliography{main}

\clearpage
\appendices
\section{Auxiliary Functions and Definitions}

\vspace{-15pt}
\begin{figure}[h]
\scriptsize
\footnotesize
\[
\begin{array}{c}

\irulelabel
{\begin{array}{c}
\tau_a = \set{a_{a1}: \tau_{a1}, \ldots, a_{an}: \tau_{an}, a_1: \tau_1, \ldots, a_p: \tau_p}\\
\tau_b = \set{a_{b1}: \tau_{b1}, \ldots, a_{bm}: \tau_{bm}, a_1: \tau'_1, \ldots, a_p: \tau'_p}\\
\end{array}}
{\begin{array}{c}
\tau_a \cup \tau_b = \set{a_1: \tau_{a1}, \ldots, a_{an}: \tau_{an}, b_1: \tau_{b1}, \ldots, a_{bm}: \tau_{bm}\\
 \qquad a_1: \tau_1 \cup \tau'_1, \ldots, a_p: \tau_p \cup \tau'_p}\\
\end{array}}
{\textrm{(T-Union)}}

\\\\\

\irulelabel
{\begin{array}{c}
\tau_a \notin \augType \lor 
\tau_b \notin \augType\\
\tau_a \ne \tau_b
\end{array}}
{\begin{array}{c}
\tau_a \cup \tau_b = \bot\\
\end{array}}
{\textrm{(T-Union-Ne)}}


\irulelabel
{\begin{array}{c}
\tau_a \notin \augType \lor 
\tau_b \notin \augType\\
\tau_a = \tau_b
\end{array}}
{\begin{array}{c}
\tau_a \cup \tau_b = \tau_a\\
\end{array}}
{\textrm{(T-Union-Eq)}}

\\\\\

\irulelabel
{\begin{array}{c}
\tau_a = \set{a_{a1}: \tau_{a1}, \ldots, a_n: \tau_{an}, a_1: \tau_1, \ldots, a_p: \tau_p}\\
\tau_b = \set{a_{b1}: \tau_{b1}, \ldots, a_m: \tau_{bm}, a_1: \tau'_1, \ldots, a_p: \tau'_p}\\
\end{array}}
{\begin{array}{c}
\tau_a \cap \tau_b = \set{a_1: \tau_1 \cap \tau'_1, \ldots, a_p: \tau_p \cap \tau'_p}\\
\end{array}}
{\textrm{(T-Inter)}}

\\\\\

\irulelabel
{\begin{array}{c}
\tau_a \notin \augType \lor 
\tau_b \notin \augType\\
\tau_a \ne \tau_b
\end{array}}
{\begin{array}{c}
\tau_a \cap \tau_b = \bot\\
\end{array}}
{\textrm{(T-Inter-Ne)}}


\irulelabel
{\begin{array}{c}
\tau_a \notin \augType \lor 
\tau_b \notin \augType\\
\tau_a = \tau_b
\end{array}}
{\begin{array}{c}
\tau_a \cap \tau_b = \tau_a\\
\end{array}}
{\textrm{(T-Inter-Eq)}}

\\\\\

\irulelabel
{\begin{array}{c}
\tau = \set{a_1: \tau_1, \ldots, a_n: \tau_n, a: \bot}
\end{array}}
{\tau = \set{a_1: \tau_1, \ldots, a_n: \tau_n}}
{\textrm{(T-Bot)}}

\\\\\

\irulelabel
{}
{\bot \cup \type = \type}
{\textrm{(T-Bot-U)}}

\irulelabel
{\begin{array}{c}
\tau_a \cap \tau_b = \tau_a\\
\tau_a \cup \tau_b = \tau_b
\end{array}}
{\tau_a \subseteq \tau_b}
{\textrm{(T-Subset)}}

\\\\\

\irulelabel
{\begin{array}{c}
\tau = \set{a_1: \tau_1, \ldots, a_n: \tau_n}\\
(\exists i. a_i = a) \lor (\exists i. a \in \tau_i \land \tau_i \in \docType)\\
i = 1, \ldots, n
\end{array}}
{a \in \tau}
{\textrm{(T-In)}}

\\\\\

\irulelabel
{\begin{array}{c}
\tau \cup \tau_b = \tau_a \cup \tau_b\\
\tau \cap \tau_b = \set{} \qquad
\tau \subseteq \tau_a\\
\end{array}}
{\tau_a - \tau_b = \tau}
{\textrm{(T-Sub)}}

\irulelabel
{\begin{array}{c}
a \notin \tau
\end{array}}
{\begin{array}{c}
\tau[\tau'/a] = \tau
\end{array}
}
{\textrm{(T-Rep-Not-In)}}

\\\\\

\irulelabel
{\begin{array}{c}
\tau = \set{a_1: \tau_1, \ldots, a_n: \tau_n}\\
\exists i. a_i = a \\
i = 1, \ldots, n
\end{array}}
{\begin{array}{c}
\tau[\tau'/a] = \set{a_1: \tau_1, \ldots, a_{i-1}: \tau_{i-1}, \\
\qquad \qquad  a_{i+1}: \tau_{i+1}, \ldots, a_n: \tau_n, a_i: \tau'}\\
\end{array}
}
{\textrm{(T-Rep-In-1)}}

\\\\\

\irulelabel
{\begin{array}{c}
\tau = \set{a_1: \tau_1, \ldots, a_n: \tau_n}\\
\exists i. a \in \tau_i \\
i = 1, \ldots, n
\end{array}}
{\begin{array}{c}
\tau[\tau'/a] = \set{a_1: \tau_1, \ldots, a_{i-1}: \tau_{i-1}, \\
\qquad \qquad  a_{i+1}: \tau_{i+1}, \ldots, a_n: \tau_n, a_i: \tau_i[\tau'/a]}\\
\end{array}
}
{\textrm{(T-Rep-In-2)}}

\end{array}
\]
\vspace{-10pt}
\caption{Type Operations}
\label{fig:type-operations}
\vspace{-10pt}
\end{figure}

\begin{definition} \label{def:formal-match}
Let $\type$ be a document type and $\augType$ be an augmented type where $\type = \set{a_1: \type_1, \ldots, a_n: \type_n}$ and $\augType = \set{b_1: T_1, \ldots, b_m: T_m, \texttt{?}_{m+1}^1: T_{m+1}, \ldots, \texttt{?}_{m+p}^1: T_{m+p}, \texttt{?}_{m+p+1}^+: T_{m+p+1}, \ldots, \texttt{?}_{m+p+q}^+: T_{m+p+q}}$. We say $\type$ matches $\augType$, denoted $\type \matches \augType$, if and only if 
\begin{enumerate}
    \item There exists a map $M_m$ from the top-level attribute in $\augType$ to the top-level attribute in $\type$ such that $\forall i. 1 \le i \le m \Rightarrow M_m[b_i] = a \land b_i = a \land \textsl{Type}(b_i) = \textsl{Type}(a)$
    \item There exists a map $M_p$ from the top-level attribute in $\augType$ to the top-level attribute in $\type$ such that $\forall i. m+1 \le i \le m+p+1 \Rightarrow M_p[b_i] = a \land (\textsl{Type}(b_i) = \textsl{Type}(a) \lor \textsl{Type}(b_i) = \cmdAny)$
    \item There exists a map $M_q$ from the top-level attribute in $\augType$ to a set of top-level attributes in $\type$ such that $\forall i. m+p+1 \le i \le m+p+q \Rightarrow M_q[b_i] = S_a \land S_a \ne \set{} \land  (\forall a. a \in S_a \Rightarrow (\textsl{Type}(b_i) = \textsl{Type}(a) \lor \textsl{Type}(b_i) = \cmdAny))$.
    \item All the maps are one-to-one correspondence. Formally, $\forall M. M \in \set{M_m, M_p, M_q} \Rightarrow (\forall d. \forall e. d \in M \land e \in M \land d \ne e \Rightarrow M[d] \ne M[e])$
    \item $K_m \cap K_p = \set{} \land K_p \cap K_q = \set{} \land K_m \cap K_q = \set{} \land K_m \cup K_p \cup K_q = A_\augType$ where $K_m, K_p, K_q$ are the key set of $M_m, M_p, M_q$ and $A_\augType = \set{b_1, \ldots, b_m, \texttt{?}_{m+1}^1, \ldots, \texttt{?}_{m+p}^1,\\ \texttt{?}_{m+p+1}^+, \ldots, \texttt{?}_{m+p+q}^+}$
    \item $V_m \cap V_p = \set{} \land V_p \cap (\bigcup V_q) = \set{} \land V_m \cap (\bigcup V_q) = \set{} \land V_m \cup V_p \cup (\bigcup V_q) = A_\type$ where $V_m, V_p, V_q$ are the value set of $M_m, M_p, M_q$ and $A_\type = \set{a_1, \ldots, a_n}$.
\end{enumerate}
\end{definition}

\section{Concrete Formal Semantics}

\begin{figure}[!t]
\scriptsize

\framebox[\columnwidth]{$\denot{\query} :: \textsl{Database } \db \rightarrow \textsl{Collection } \collection$}
\[
\begin{array}{rcl}
\denot{N}_{\db} &=& \db[N] \\
\denot{\cmdProject(\prog, \vec{h})}_{\db} & = &  \cmdMap(\lambda \doc. \cmdExtractAttrs(\doc, \vec{h}), \denot{\prog}_{\db})\\

\denot{\cmdMatch(\prog, \pred)}_{\db} & = &  \cmdFilter(\lambda \doc. \denot{\pred}_{\doc} = \top, \denot{\prog}_{\db})\\

\denot{\cmdAddFields(\prog, \vec{h}, \vec{E})}_{\db} & = &  \cmdMap(\lambda \doc. \cmdAddAttrs(\doc, \vec{h}, \vec{\denot{E}_\doc}), \denot{\prog}_{\db})\\

\denot{\cmdUnwind(\prog, h)}_{\db} & = &  \cmdFlatMap(\lambda \doc. \cmdFlattenArr(\doc, h), \denot{\prog}_{\db})\\

\denot{\cmdGroup(\prog, \vec{k}, \vec{a}, \Vec{A})}_{\db} & = & \cmdMap(\lambda g.\cmdAddAttrs(\set{\_\texttt{id}: g}, \vec{[a]}, \\
        & & \quad \denot{\vec{A}}_{\cmdFilter(\lambda D. \cmdExtractAttrs(\doc, \vec{k})=g, \denot{\prog}_{\db})}), \\
        & & \quad \cmdDedup(\cmdMap(\lambda \doc.\cmdExtractAttrs(\doc, \vec{k}),   \denot{\prog}_{\db})))\\

\denot{\cmdLookup(\prog, k, h, N, a)}_{\db} & = &  \cmdMap(\lambda \doc. \doc[a \mapsto C], \denot{\prog}_{\db})~\text{where}\\
&&  C = \cmdFilter(\lambda F. \denot{k}_{\doc} = \denot{h}_F, \db[N])\\

\end{array}
\]

\framebox[\columnwidth]{$\denot{\pred} :: \textsl{Document } \doc \rightarrow \cmdBool$}
\[
\begin{array}{rcl}
\denot{b}_{\doc} &=& b~\text{where}~b \in \set{\top, \bot}\\
\denot{h \circ c}_{\doc} &=& \denot{h}_{\doc} \circ c~\text{where}~\circ \in \set{=, \neq}\\
\denot{h < c}_{\doc} &=& \cmdIte(\denot{h}_{\doc} = \cmdNull \lor c = \cmdNull, \bot,  \denot{h}_{\doc} < c)\\
\denot{h > c}_{\doc} &=& \cmdIte(\denot{h}_{\doc} = \cmdNull \lor c = \cmdNull, \bot,  \denot{h}_{\doc} > c)\\
\denot{h \leq c}_{\doc} &=& \denot{h < c}_{\doc} \lor \denot{h = c}_{\doc}\\
\denot{h \geq c}_{\doc} &=& \denot{h > c}_{\doc} \lor \denot{h = c}_{\doc}\\
\denot{\cmdSize(h, c)}_{\doc} &=& \cmdIte(\denot{h}_\doc = \cmdNull \lor c = \cmdNull, \bot, |\denot{h}_{\doc}| = c) \\
\denot{\cmdExists(h)}_{\doc} &=& \cmdHasAp(D, h) \\
\denot{\pred_1 \land \pred_2}_{\doc} &=& \denot{\pred_1}_{\doc} \land \denot{\pred_2}_{\doc} \\
\denot{\pred_1 \lor \pred_2}_{\doc} &=& \denot{\pred_1}_{\doc} \lor \denot{\pred_2}_{\doc} \\
\denot{\neg \pred}_{\doc} &=& \neg \denot{\pred}_{\doc} \\
\end{array}
\]

\framebox[\columnwidth]{$\denot{E} :: \textsl{Document } \doc \rightarrow \cmdValue$}
\[
\begin{array}{rcl}
\denot{h}_\doc &=& \cmdGet(D, h) \\
\denot{h_1 \oplus h_2}_{\doc} &=& \cmdIte(\denot{h_1}_\doc =\cmdNull \lor \denot{h_2}_\doc = \cmdNull, \cmdNull, \denot{h_1}_\doc \oplus \denot{h_2}_\doc)\\
\denot{f(h)}_\doc &=& f(\denot{h}_\doc) \\
\end{array}
\]

\framebox[\columnwidth]{$\denot{A} :: \textsl{Document List } \vec{\doc} \rightarrow \cmdValue$}
\[
\begin{array}{rcl}
\denot{\cmdSum(h)}_{\vec{\doc}} &=& \text{let}~ xs = \cmdMap(\lambda \doc. \denot{h}_\doc, \vec{\doc}) ~\text{in}~ \\
& & \hspace{.4em} \sum_{x \in xs} \cmdIte(x = \cmdNull, 0, x)\\
\denot{\cmdMin(h)}_{\vec{\doc}} &=& \text{let}~ xs = \cmdMap(\lambda \doc. \denot{h}_\doc, \vec{\doc}) ~\text{in}~ \\
& & \hspace{.4em} \cmdIte(\cmdAllNull(xs), \cmdNull, \tmin_{x \in xs} \cmdIte(x = \cmdNull, +\infty, x)) \\
\denot{\cmdMax(h)}_{\vec{\doc}} &=& \text{let}~ xs = \cmdMap(\lambda \doc. \denot{h}_\doc, \vec{\doc}) ~\text{in}~ \\
& & \hspace{.4em} \cmdIte(\cmdAllNull(xs), \cmdNull, \tmax_{x \in xs} \cmdIte(x = \cmdNull, -\infty, x)) \\
\denot{\cmdAvg(h)}_{\vec{\doc}} &=& \text{let}~ xs = \cmdMap(\lambda \doc. \denot{h}_\doc, \vec{\doc}) ~\text{in}~ \\
& & \hspace{.4em} \cmdIte(\cmdAllNull(xs), \cmdNull, \denot{\cmdSum(h)}_{\vec{\doc}} \\
& & \hspace{.4em} / \sum_{x \in xs} \cmdIte(x = \cmdNull, 0, 1)) \\
\denot{\cmdCount()}_{\vec{\doc}} &=& |\vec{\doc}| \\
\end{array}
\]

\vspace{-5pt}
\caption{Semantics of database queries. $\cmdExtractAttrs(D, \vec{h})$ constructs a document by extracting access paths $\vec{h}$ and their values from document $D$. $\cmdAddAttrs(D, \vec{h}, \vec{v})$ returns a document by adding new access paths $\vec{h}$ with values $\vec{E}$ to document $D$. $\cmdFlattenArr(D, h)$ requires $h$ to be an array in document $D$. It maps $D$ to a list of documents where the value of $h$ in the $i$-th document is the $i$-th element in the original array. $\cmdHasAp(D, h)$ checks whether there exists an access path $h$ in document $D$. 
$\cmdGet(D, h)$ gets the value of access path $h$ from document $D$.}
\label{fig:concrete-semantics}
\vspace{-10pt}
\end{figure}

The denotational semantics of our query language is shown in Figure~\ref{fig:concrete-semantics}. Intuitively, since the collection is a \emph{array} of documents, we use standard higher-order functions for lists (e.g., \cmdMap and \cmdFilter) to formally define the semantics. \cmdDedup is the standard de-duplication function and \cmdIte is the standard if-then-else function. \cmdFlatMap is a flat map function that flattens the mapped array.

At a high level, there are seven query operators. Each query takes as input a database $\db$ and produces as output a collection. Specifically, a simple collection name $N$ just looks up the corresponding collection in the database $\db$, as defined by $\denot{N}_{\db}$.
$\cmdProject(\query, \vec{h})$ projects out specified access paths $\vec{h}$ for each document in $\denot{\query}_{\db}$.
$\cmdMatch(\query, \pred)$ filters the documents in $\denot{\query}_{\db}$ and only keeps those satisfying the predicate $\pred$.
$\cmdAddFields(\query, \vec{h}, \vec{E})$ adds new access paths $\vec{h}$ with values of expressions $\vec{E}$ to each document in $\denot{\query}_{\db}$.
$\cmdUnwind(\query, h)$ requires that the access path $h$ corresponds to an array and unwinds the result of $\query$ at $h$. In particular, it first maps each document in $\denot{\query}_{\db}$ to a list of documents where the value of $h$ in $i$-th document is the $i$-th element in the original array, and then adds all documents to the result.
$\cmdGroup(\query, \vec{k}, \vec{\attr}, \vec{A})$ groups all documents in $\denot{\query}_{\db}$ based on keys $\vec{k}$ and returns a collection that contains a document for each group with new attributes $\vec{\attr}$ and aggregated values $\vec{A}$.
$\cmdLookup(\query, k, h, N, \attr)$ adds a new attribute $\attr$ to each document in $\denot{\query}_{\db}$. The value of $\attr$ is a list of documents from the foreign collection $N$ such that the value of $k$ in $\denot{\query}_{\db}$ equals to the value of $h$ in $N$.
The semantics for predicates and expressions are straightforward from the operator names.

\section{Proofs}

\begin{proof}[Proof of Theorem~\ref{thm:abs-soundness}]
Let $\absDb$ be an abstract database over schema $\schema$, $\sketch$ be a sketch, $\query$ be a query that is a completion of $\sketch$, and $(I, O)$ be an input-output example, where $\vdash I: \schema$ and $\vdash O : \arrType{\type_O}$.
If $\denot{\query}_{I} = O$, $I \refines \absDb$, and $\absDb, \type_O \vdash \sketch \Downarrow \absSet$, then there exists an abstract collection $\absColl \in \absSet$ such that $O \refines \absColl$.
\end{proof}

\begin{proof}[Proof]
We assume that the newly generated fields (i.e. ones match \texttt{?}) attributes cannot be changed and must be all kept in the output of the whole query. We also assume that \cmdAddFields will not overwrite existing fields.

Prove by structural induction on $\sketch$. To avoid obfuscation, we use $\absDb, \type_E \vdash \sketch \Downarrow \absSet$ to denote $\sketch$ evaluates to $\absSet$ under the context of the abstract database $\absDb$ and the expected output document type $\type_E$. The context is global and will not change in evaluation.
\begin{enumerate}
    \item Base case: $\sketch = N$.
    
    Suppose that $\query = N$, $\vdash I: \schema$ and $\vdash O: \arrType{\type_O}$. By Figure \ref{fig:abstract-semantics} we have $\absDb, \type_E \vdash \sketch \Downarrow \absSet$ where $\absSet = \set{\absDb[N]}$. By Figure \ref{fig:concrete-semantics} we have $\denot{\query}_I = O$ where $O = I[N]$. By Figure \ref{fig:typing} we have $\vdash I[N]: \schema[N]$. Thus $\vdash O: \schema[N]$ and $\schema[N] = \arrType{\type_O}$. Let $\absColl = \absDb[N] \in \absSet$. Then $\absColl = (\type, l_0=|I[N]|)$ where $\schema[N] = \arrType{\type}$. Therefore $\arrType{\type_O} = \arrType{\type}$. So we have $\type_O = \type$. By the definition of match, $\type_O \matches \type$. Also the formula $l_0=|I[N]| \land l_0 = |O|$ is SAT because $O = I[N]$. Based on the above, there exists an abstract collection $\absColl = \absDb[N] \in \absSet$ s.t. $O \refines \absColl$.

    Thus, Theorem \ref{thm:abs-soundness} for the base case  $\sketch = N$ is proved.

    \item Inductive case: $\sketch' = \cmdMatch(\sketch, P)$

    Suppose that $\query' = \cmdMatch(\query, P)$, $\vdash I: \schema$, $\vdash O': \arrType{\type_O'}$, $\denot{\query'}_I = O'$, $\absDb, \type_E \vdash \sketch' \Downarrow \absSet'$.
   
    Suppose that $\vdash I: \schema$, $\vdash O: \arrType{\type_O}$, $\denot{\query}_I = O$, $\absDb, \type_E \vdash \sketch \Downarrow \absSet$. By the inductive hypothesis, there exists $\absColl \in \absSet$ s.t. $O \refines \absColl$. 

    Suppose $\absColl = (\augType, \pred)$. Then by Figure \ref{fig:abstract-semantics} we have $(\augType, \pred \land l_j \le l_i) \in\absSet'$. By Figure \ref{fig:concrete-semantics} we have $\type_O = \type_O'$, $|O'| \le |O|$. 

    By $O \refines \absColl$ we have $\type_O \matches \augType$ and $\pred \land l_i = |O|$ is SAT. Therefore $\type_O' = \type_O \matches \augType$ and $\pred \land l_j \le l_i \land l_j = |O'|$ is SAT. So there exists a $\absColl' = (\augType, \pred \land l_j \le l_i) \in \absSet'$ s.t. $O' \refines \absColl'$.
 
    Thus, Theorem \ref{thm:abs-soundness} for the inductive case $\sketch = \cmdMatch(\sketch, P)$ is proved.

    \item Inductive case: $\sketch' = \cmdProject(\sketch, \vec{h})$

    Suppose $\query' = \cmdProject(\query, \vec{h})$, $\vdash I: \schema$, $\vdash O': \arrType{\type_O'}$, $\denot{\query'}_I = O'$, $\absDb, \type_E \vdash \sketch' \Downarrow \absSet'$.

    Suppose that $\vdash I: \schema$, $\vdash O: \arrType{\type_O}$, $\denot{\query}_I = O$, $\absDb, \type_E \vdash \sketch \Downarrow \absSet$. By the inductive hypothesis, there exists $\absColl \in \absSet$ s.t. $O \refines \absColl$. 

    Suppose $\absColl = (\augType, \pred)$. Then by Figure \ref{fig:abstract-semantics} we have $((\augType - \type_k) \cup (\type_k \cap \type_E), \pred \land l_j = l_i) \in \absSet'$ where $\type_k = \cmdToDocType(\augType)$. By Figure \ref{fig:concrete-semantics} we have $\type_O' = \type_O \cap \type_E$, $|O| = |O'|$.

    By $O \refines \absColl$ we have $\type_O \matches \augType$ and $\pred \land l_i = |O|$ is SAT. Therefore $\type_k \subseteq \type_O$ and $\type_O - \type_k \matches \augType - \type_k$. Thus $\type_O \cap \type_E = (\type_k + \type_O - \type_k) \cap \type_E = (\type_k \cap \type_E) \cup ((\type_O - \type_k) \cap \type_E)$. By the assumption that newly generated fields can not be changed and must be all kept in the output, we can know $\type_O - \type_k$ is the newly generated and $\type_O - \type_k \subseteq \type_E$. So we have $((\type_O - \type_k) \cap \type_E) = (\type_O - \type_k) \matches (\augType - \type_k)$. Thus $\type_O' \matches (\augType - \type_k) \cup (\type_k \cap \type_E)$. We also have $\pred \land l_j = l_i \land l_j = |O'|$ is SAT. So there exists $\absColl' = ((\augType - \type_k) \cup (\type_k \cap \type_E), \pred \land l_j = l_i) \in \absSet'$ s.t. $O' \refines \absColl'$.
    
    Thus, Theorem \ref{thm:abs-soundness} for the inductive case $\sketch' = \cmdProject(\sketch, \vec{h})$ is proved.

    \item Inductive case: $\sketch' = \cmdAddFields(\sketch, \vec{h}, \vec{E})$

    Suppose $\query' = \cmdAddFields(\query, \vec{h}, \vec{E})$, $\vdash I: \schema$, $\vdash O': \arrType{\type_O'}$, $\denot{\query'}_I = O'$, $\absDb, \type_E \vdash \sketch' \Downarrow \absSet'$.

    Suppose that $\vdash I: \schema$, $\vdash O: \arrType{\type_O}$, $\denot{\query}_I = O$, $\absDb, \type_E \vdash \sketch \Downarrow \absSet$. By the inductive hypothesis, there exists $\absColl \in \absSet$ s.t. $O \refines \absColl$. 

    Suppose $\absColl = (\augType, \pred)$. Then by Figure \ref{fig:abstract-semantics} we have $\pair{\augType \cup $\set{$\texttt{?}_0^+$: \cmdAny}$}{\pred \land l_j = l_i} \in \absSet'$. By Figure \ref{fig:concrete-semantics} we have $|O| = |O'|$ and $\type_O \subseteq \type_O'$ where $\type_O \ne \type_O'$.

    By $O \refines \absColl$ we have $\type_O \matches \augType$ and $\pred \land l_i = |O|$ is SAT. Thus $\type_O' - \type_O \matches \set{\texttt{?}_0^+: \cmdAny}$. Therefore $\type_O' \matches \augType \cup \set{\texttt{?}_0^+: \cmdAny}$. We also have $\pred \land l_j = l_i \land l_j = |O'|$ is SAT. So there exists $\absColl' = \pair{\augType \cup $\set{$\texttt{?}_0^+$: \cmdAny}$}{\pred \land l_j = l_i} \in \absSet'$ s.t. $O' \refines \absColl'$.

    Thus, Theorem \ref{thm:abs-soundness} for the inductive case $\sketch' = \cmdAddFields(\sketch, \vec{h}, \vec{E})$ is proved.

    \item Inductive case: $\sketch' = \cmdUnwind(\sketch, h)$

    Suppose $\query' = \cmdUnwind(\query, h)$, $\vdash I: \schema$, $\vdash O': \arrType{\type_O'}$, $\denot{\query'}_I = O'$, $\absDb, \type_E \vdash \sketch' \Downarrow \absSet'$.

    Suppose that $\vdash I: \schema$, $\vdash O: \arrType{\type_O}$, $\denot{\query}_I = O$, $\absDb, \type_E \vdash \sketch \Downarrow \absSet$. By the inductive hypothesis, there exists $\absColl \in \absSet$ s.t. $O \refines \absColl$. 

    Suppose $\absColl = (\augType, \pred)$. Then by Figure \ref{fig:abstract-semantics} we have $\set{\pair{\augType[\type/a_A]}{\pred \land l_j \ge l_i} | a_A \in \augType \land \forall p.\forall q. a_A \ne \texttt{?}_p^q}  \subseteq \absSet'$ where $\textsl{Type}(a_A) = \arrType{\tau}$ and $\textsl{NotInArr}(a_A)$. By Figure \ref{fig:concrete-semantics} we have $|O'| \ge |O|$ and $\type_O' = \type_O[\type_h / a_h]$ where $\vdash a_h: \arrType{\type_h}$, $a_h \in \type_O$ and $a_h$ is fully qualified by access path $h$. 
    
    By $O \refines \absColl$ we have $\type_O \matches \augType$ and $\pred \land l_i = |O|$ is SAT. There must exist an attribute $a_A \in \augType$ such that $a_A = a_h$, and $\arrType{\type} = \arrType{\type_h}$ and $a_A$ is not a placeholder. Therefore $\type = \type_h$ and we have $\type_O[\type_h/a_h] \matches \augType[\type/a_A]$. We also have $\pred \land l_j \ge l_i \land l_j = |O'|$ is SAT. So there exists $\absColl' = \pair{\augType[\type/a_A]}{\pred \land l_j \ge l_i}$ s.t. $O' \refines \absColl'$.

    Thus, Theorem \ref{thm:abs-soundness} for the inductive case $\sketch' = \cmdUnwind(\sketch, h)$ is proved.

    \item Inductive case: $\sketch' = \cmdLookup(\sketch, h, h, N, a)$

    Suppose $\query' = \cmdLookup(\query, h, h, N, a)$, $\vdash I: \schema$, $\vdash O': \arrType{\type_O'}$, $\denot{\query'}_I = O'$, $\absDb, \type_E \vdash \sketch' \Downarrow \absSet'$.

    Suppose that $\vdash I: \schema$, $\vdash O: \arrType{\type_O}$, $\denot{\query}_I = O$, $\absDb, \type_E \vdash \sketch \Downarrow \absSet$. By the inductive hypothesis, there exists $\absColl \in \absSet$ s.t. $O \refines \absColl$. 

    Suppose $\absColl = (\augType, \pred)$. Then by Figure \ref{fig:abstract-semantics} we have $\set{\pair{\augType \cup $\set{$\texttt{?}_j^1$:$\arrType{\tau_F}$}$}{\pred \land l_j = l_i} | \tau_F \in F}  \subseteq \absSet'$ where $F = \set{\cmdToDocType(\absDb[N]_\augType) | N \in \textsl{dom}(\absDb)}$. By Figure \ref{fig:concrete-semantics} we have $|O'| = |O|$ and $\type_O' = \type_O \cup \set{\texttt{a}: \schema[N_F]}$ where $N_F \in \textsl{dom}(\schema)$.

    By $O \refines \absColl$ we have we have $\type_O \matches \augType$ and $\pred \land l_i = |O|$ is SAT. By the definition of $I \refines \absDb$ we have $\schema[N_F] = \arrType{\absDb[N_F]_\augType}$ and there are no placeholders in $\absDb[N_F]_\augType$. Then there must exist a $\type_F \in F$ such that $\type_F = \cmdToDocType(\absDb[N_F]_\augType) = \absDb[N_F]_\augType$. Thus $\schema[N_F] = \arrType{\type_F}$ and we have $\type_O \cup \set{\texttt{a}: \schema[N_F]} \matches \augType \cup $\set{$\texttt{?}_j^1$:$\arrType{\tau_F}$}$ $. We also have $\pred \land l_j = l_i \land l_j = |O'|$ is SAT. So there exists $\absColl' = \pair{\augType \cup $\set{$\texttt{?}_j^1$:$\arrType{\tau_F}$}$}{\pred \land l_j = l_i}$ s.t. $O' \refines \absColl'$.

    Thus, Theorem \ref{thm:abs-soundness} for the inductive case $\sketch' = \cmdLookup(\sketch, h, h, N, a)$ is proved.

    \item Inductive case: $\sketch' = \cmdGroup(\sketch, \vec{h}, \vec{a}, \vec{A})$

    Suppose $\query' = \cmdGroup(\query, \vec{h}, \vec{a}, \vec{A})$, $\vdash I: \schema$, $\vdash O': \arrType{\type_O'}$, $\denot{\query'}_I = O'$, $\absDb, \type_E \vdash \sketch' \Downarrow \absSet'$.

    Suppose that $\vdash I: \schema$, $\vdash O: \arrType{\type_O}$, $\denot{\query}_I = O$, $\absDb, \type_E \vdash \sketch \Downarrow \absSet$. By the inductive hypothesis, there exists $\absColl \in \absSet$ s.t. $O \refines \absColl$. 

    Suppose $\absColl = (\augType, \pred)$. Then by Figure \ref{fig:abstract-semantics} we have $\set{\pair{\texttt{\set{\_id:$\tau_K$}} \cup \type_g}{\pred \land l_j < l_i} | \tau_K \subseteq \cmdToDocType(\augType) \land \type_g \in G} \subseteq \absSet'$ where $G = $\set{\set{$\texttt{?}_j^+$: \cmdNum}, \set{}}$ $. By Figure \ref{fig:concrete-semantics} we have $|O'| < |O|$ and $\type_O' = \set{\texttt{\_id}: \set{b_1: \type_1, \ldots, b_m: \type_m}} \cup \set{a_1: \cmdNum, \ldots, a_n: \cmdNum}$ where all the access paths in $\set{b_1: \type_1, \ldots, b_m: \type_m}$ is $\vec{h}$, $\set{b_1: \type_1, \ldots, b_m: \type_m} \subseteq \type_O$, and $\vec{a} = [a_1, \ldots, a_n]$. 

    By $O \refines \absColl$ we have we have $\type_O \matches \augType$ and $\pred \land l_i = |O|$ is SAT. By the assumption that newly generated fields can not be changed and must be all kept in the output, we can know there are no placeholders in $\augType$ thus $\augType = \cmdToDocType(\augType)$ and there are newly generated fields in $\type_O$. So we have $\type_O = \augType$. There must exist a $\type_K \subseteq \cmdToDocType(\augType)$ such that $\type_K = \set{b_1: \type_1, \ldots, b_m: \type_m}$. If $n = 0$ then there exists a $\type_g = \set{}$ such that $\set{a_1: \cmdNum, \ldots, a_n: \cmdNum} \matches \type_g$ because $\set{a_1: \cmdNum, \ldots, a_n: \cmdNum}$ is an empty document type now. Otherwise if $n > 0$ the there exists a $\type_g = $\set{$\texttt{?}_j^+$: \cmdNum}$ $ such that $\set{a_1: \cmdNum, \ldots, a_n: \cmdNum} \matches \type_g$ by the definition of match. Thus we have $\set{\texttt{\_id}: \set{b_1: \type_1, \ldots, b_m: \type_m}} \cup \set{a_1: \cmdNum, \ldots, a_n: \cmdNum} \matches \texttt{\set{\_id:$\tau_K$}} \cup \type_g$. We also have $\pred \land l_j < l_i \land l_j = |O'|$ is SAT. So there exists $\absColl' = \pair{\texttt{\set{\_id:$\tau_K$}} \cup \type_g}{\pred \land l_j < l_i}$ s.t. $O' \refines \absColl'$

    Thus, Theorem \ref{thm:abs-soundness} for the inductive case $\sketch' = \cmdGroup(\sketch, \vec{h}, \vec{a}, \vec{A})$ is proved.

\end{enumerate}
\end{proof}

\begin{proof}[Proof of Theorem~\ref{thm:deduce-soundness}]
Given a database schema $\schema$, a sketch $\sketch$, and input-output examples $\vec{\exa}$, if $\textsc{Deduce}(\schema, \sketch, \vec{\exa})$ returns $\bot$, then there is no completion $\query$ of $\sketch$ such that for all $(I, O) \in \vec{\exa}$, $\denot{\query}_I = O$.
\end{proof}

\begin{proof}[Proof]
Prove by contradiction. Suppose if $\textsc{Deduce}(\schema, \sketch, \vec{\exa})$ returns $\bot$, then there is a completion $\query$ of $\sketch$ such that for all $(I, O) \in \vec{\exa}$, $\denot{\query}_I = O$. Therefore by Theorem \ref{thm:abs-soundness}, we can know for all example $\exa_j = (I_j, O_j) \in \vec{\exa}$, there exists an abstract collection $\absColl \in \absSet_j$ such that $O_j \refines \absColl$. Then by the procedure \textsc{Deduce}, we can know $\textsc{Deduce}(\schema, \sketch, \vec{\exa})$ returns $\top$. Thus there is a contradiction. So Theorem \ref{thm:deduce-soundness} is proved.
\end{proof}

\begin{proof}[Proof of Theorem~\ref{thm:soundness} (Soundness)]
Let $\schema$ be a database schema, $\vec{\exa}$ be input-output examples, and $N$ be a collection name. Suppose \textsc{CompleteSketch} is sound, if $\textsc{Synthesize}(\schema, \vec{\exa}, N)$ returns a query $\query$, then $\query$ satisfies examples $\vec{\exa}$.
\end{proof}

\begin{proof}[Proof]
By the procedure \textsc{Synthesize} we can know that if $\textsc{Synthesize}(\schema, \vec{\exa}, N)$ returns a query $\query$, then $\query \ne \bot$. By the soundness of \textsc{CompleteSketch} we have $\query$ satisfies examples $\vec{\exa}$. And there must exist a sketch $\sketch$ such that $\query$ is a completion of $\sketch$. By Theorem \ref{thm:deduce-soundness}, $\textsc{Deduce}(\schema, \sketch, \vec{\exa})$ return $\top$. Thus this $\query$ can be really returned. So Theorem \ref{thm:soundness} (Soundness) is proved.
\end{proof}

\begin{proof}[Proof of Theorem~\ref{thm:completeness} (Completeness)]
Let $\schema$ be a database schema, $\vec{\exa}$ be input-output examples, and $N$ be a collection name. Suppose \textsc{CompleteSketch} is complete, if there exists a query accepted by the grammar in Figure~\ref{fig:syntax} that is over collection $N$ and satisfies examples $\vec{\exa}$, then $\textsc{Synthesize}(\schema, \vec{\exa}, N)$ does not return $\bot$.
\end{proof}

\begin{proof}
If there exists a query $\query$ accepted by the grammar in Figure~\ref{fig:syntax} that is over collection $N$ and satisfies examples $\vec{\exa}$, then $\query$ must be a completion of a sketch $\sketch$. By Theorem \ref{thm:deduce-soundness}, $\textsc{Deduce}(\schema, \sketch, \vec{\exa})$ returns $\top$, thus $\query = \textsc{CompleteSketch}(\schema, \sketch, \vec{\exa})$. By the completeness of \textsc{CompleteSketch}, $\query \ne \bot$. Thus $\query$ will be returned, which means that $\textsc{Synthesize}(\schema, \vec{\exa}, N)$ does not return $\bot$. So Theorem \ref{thm:completeness} (Completeness) is proved.
\end{proof}



\end{document}